\documentclass[11pt]{article}
\usepackage[letterpaper, margin=1in]{geometry}

\usepackage{graphicx}
\usepackage{amsthm,amsmath,amssymb}
\usepackage{mathabx}
\usepackage{todonotes}
\usepackage{geometry}
\usepackage{xspace}
\usepackage{xcolor}
\usepackage{thmtools}
\usepackage{algorithm}
\usepackage[noend]{algpseudocode}
\usepackage{thm-restate}
\usepackage{hyperref}
\usepackage{cleveref}

\newtheorem{definition}{Definition}[section]
\newtheorem{theorem}{Theorem}[section]

\newtheorem{lemma}[theorem]{Lemma}
\newtheorem{proposition}[theorem]{Proposition}
\newtheorem{observation}[theorem]{Obervation}
\newtheorem{claim}{Claim}[theorem]

\newcommand*{\claimproofs}{Proof of the Claim.\quad}

\bibliographystyle{plain}

\newcommand{\icfd}{{\sc EF-ICFD}}

\title{Beyond Exact Fairness: Envy-Free Incomplete Connected Fair Division}

\author{
  Ajaykrishnan E S\thanks{This work was supported by NSF Grant~No.~2505099, \emph{Collaborative Research: AF Medium: Structure and Quasi-Polynomial Time Algorithms}.}\\
  University of California, Santa Barbara, USA\\
  \texttt{es.ajaykrishnan@gmail.com}
  \and
  Daniel Lokshtanov\footnotemark[1]\\
  University of California, Santa Barbara, USA\\
  \texttt{daniello@ucsb.edu}
}

\date{} 

\begin{document}
\maketitle

\begin{abstract}

We study the problem of {\sc Envy-Free Incomplete Connected Fair Division}, where exactly $p$ vertices of an undirected graph must be allocated to agents such that each agent receives a connected share and does not envy another agent's share. Focusing on agents with additive valuations, we show that the problem remains computationally hard when parameterized by $p$ and the number of agents. This result holds even for star graphs and with the input numbers given in unary representation, thereby resolving an open problem posed by Gahlawat and Zehavi (FSTTCS 2023).

In stark contrast, we show that if one is willing to tolerate even the slightest amount of envy, then the problem becomes efficient with respect to the natural parameters. Specifically, we design an Efficient Parameterized Approximation Scheme parameterized by $p$ and the number of agent types. Our algorithm works on general graphs and remains efficient even when the input numbers are provided in binary representation.
\end{abstract}

\section{Introduction}\label{sec:intro}

Fair division is a critical problem in resource allocation, focused on distributing resources among multiple agents in a way that meets certain fairness criteria. The theoretical study of this problem has been a core area of research, with work dating as far back back as the 1940s \cite{first_fair_division_steinhaus1948problem}. Fair division, beyond its theoretical appeal, has numerous practical applications, as demonstrated by platforms like Spliddit \cite{spliddit}, which use fair allocation algorithms to solve real-world problems such as sharing rent, splitting taxi fares, and assigning goods. However, for many of these practical scenarios, one ends up having to address more complex constraints. An example of this is the case where the goods are \emph{indivisible} and the bundle of items corresponding to each agent must satisfy some \emph{connectivity} requirements. The $\varphi${\sc-Connected Fair Division} ($\varphi${\sc-CFD}) problem, introduced by Bouveret et al.~\cite{cfd_bouveret2017fair} models this using a graph whose vertices represent items, where the bundles allocated to agents are required to form connected subsets of the vertices, where $\varphi$ specifies some desired notion of fairness. 

However, this model requires that every vertex of the graph be allocated to some agent, which fails to cover some natural scenarios where there is a need to be \emph{economic} in the allocation. As an example consider a museum that plans to allocate sections of a large exhibition hall for various temporary art exhibits. Each exhibit needs a connected space to display its artwork effectively, but the museum must also reserve areas for future events or special exhibits. The challenge then becomes to assign some part of the available space fairly among the groups, ensuring each receives a connected section, while leaving enough space unallocated for other purposes. Another scenario where being economic is crucial involves allocating resource-intensive goods, such as when a company rents office spaces or a startup purchases equipment for its employee teams.

Motivated by these considerations, Gahlawat and Zehavi~\cite{icfd_gahlawat2023parameterized} introduced the {\sc$\varphi$-Incomplete Connected Fair Division} problem ($\varphi${\sc-ICFD}). In this, the goal is to find a \emph{valid} and $\varphi$-allocation, that is, an allocation of exactly $p$ vertices of the graph where each agent receives a distinct connected bundle that satisfies the fairness criterion specified by $\varphi$. From the perspective of parameterized complexity, they consider this problem for various well known notions of fairness, one of them being envy-freeness. An allocation of vertices to agents is \emph{envy-free} if, for every agent, the value they assign to their own bundle is at least as large as the value they assign to any other agent's bundle. They show that, if the valuations are provided in binary representation, then the {\sc Envy-Free Incomplete Connected Fair Division} problem becomes W[1] hard parameterized by $p$ and the number of agents. However, their work leaves behind the following open problem: does {\sc Envy-Free Incomplete Connected Fair Division} admit a Fixed-Parameter Tractable (FPT) algorithm parameterized by number of agents and $p$, if the valuations are given in unary representation?

\subsection{Our Contribution}
    We resolve the open problem posed by Gahlawat and Zehavi~\cite{icfd_gahlawat2023parameterized} by proving that {\sc Envy-Free Incomplete Connected Fair Division} is W[1]-hard when parameterized by $p + |A|$, where $A$ is the set of agents, when the input numbers are in unary representation, even for highly restricted input instances such as star graphs. Our result follows via a reduction from {\sc $(k,M)$-Vector-Sum}, a multidimensional variant of the classic NP-hard problem {\sc Subset Sum}, shown to be W[1]-hard by Abboud et al.~\cite{mss_w1_abboud2014losing}, parameterized by $k + d$, where $d$ is the dimension of the input vectors, even when the input numbers are in unary representation.

    \begin{restatable}{theorem}{Hardness}\label{thm:w1_hardness}
        {\sc Envy-Free Incomplete Connected Fair Division} is W[1]-hard parameterized by $p\,+\,|A|$, with the input numbers provided in unary representation, even on star graphs.
    \end{restatable}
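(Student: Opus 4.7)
The plan is to reduce from {\sc $(k,M)$-Vector-Sum}, which asks whether some $k$ of given vectors $w_1,\dots,w_n\in\mathbb{N}^d$ sum to a target $M\in\mathbb{N}^d$ and is $W[1]$-hard in $k+d$ even under unary encoding. I would first preprocess by adding a large constant $C$ to every coordinate of every vector (updating the target to $M+kC\mathbf{1}$). This preserves the set of solutions, stays in unary, and ensures both that each $(w_j)_i\ge 1$ and that $M_i > V_i := \max_j (w_j)_i$ for every $i\in[d]$. I will also assume $k \ge 2$, which is harmless since $1$-{\sc Vector-Sum} is polynomial.

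Next I would build a star with center $c$, ``vector leaves'' $\ell_1,\dots,\ell_n$ (one per input vector), and $2d$ ``dummy leaves'' $d^U_1,\dots,d^U_d,d^V_1,\dots,d^V_d$. The agent set is $\{S\}\cup\{U_i,V_i:i\in[d]\}$, giving $|A|=2d+1$, and I set $p=k+|A|$. Choosing $W$ strictly larger than every $V_i$, the valuations are as follows: $S$ assigns a large value $N$ to $c$ and $0$ elsewhere; each $U_i$ assigns $(w_j)_i$ to $\ell_j$, $M_i$ to $d^U_i$, and $0$ to everything else; each $V_i$ assigns $W-(w_j)_i$ to $\ell_j$, $kW-M_i$ to $d^V_i$, and $0$ to everything else. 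The intended allocation gives $S$ the center plus $k$ vector leaves encoding a Vector-Sum witness, and each personal dummy to its owner; the forward direction is then a direct computation showing that the equality $\sum_{\ell_j\in L_S}(w_j)_i = M_i$ makes $U_i$ and $V_i$ value their own dummy exactly as they value $S$'s bundle, while every other pairwise comparison trivially favours the self-bundle.

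The bulk of the work lies in the backward direction. Since the star only admits singletons or center-plus-leaves as connected sets and $p>|A|$, exactly one agent owns $c$; because $S$ alone values $c$, envy-freeness forces $S$ to be that agent and to additionally hold exactly $k$ leaves. No dummy can sit in $S$'s bundle, because the corresponding $U_i$ or $V_i$ would value $S$'s bundle at least $M_i>V_i$ while valuing its own singleton at most $V_i$, causing envy. I then argue that each $U_i$ holds $d^U_i$ and each $V_i$ holds $d^V_i$. If $U_i$ holds a wrong dummy, its self-value is $0$ while $v_{U_i}(S)=\sum(w_j)_i\ge k\ge 1$, a contradiction (this is where strict positivity of coordinates is used). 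If $U_i$ holds a vector leaf, then $\sum(w_j)_i \le V_i$; I split on $V_i$'s bundle: the correct dummy gives $\sum(w_j)_i\ge M_i>V_i$; a wrong dummy gives $\sum(w_j)_i\ge kW>V_i$; a vector leaf gives $\sum(w_j)_i\ge(k-1)W>V_i$. Each possibility contradicts $\sum(w_j)_i \le V_i$, ruling out the vector-leaf option as well. A symmetric argument fixes $V_i$.

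Once the allocation is pinned to the intended one, envy of $U_i$ toward $S$ gives $\sum_{\ell_j\in L_S}(w_j)_i\le M_i$ and envy of $V_i$ toward $S$ gives the reverse inequality, yielding coordinate-wise equality and hence a Vector-Sum witness. Since $p+|A|=k+4d+2=O(k+d)$, the $W[1]$-hardness of $(k,M)$-\textsc{Vector-Sum} transfers, and the whole construction lives on a star with unary numbers as required. I expect the trickiest step to be the structural rigidity in the backward direction---in particular the joint case analysis ruling out ``$U_i$ (or $V_i$) holds a vector leaf''---where the parameters $C$ (chosen in preprocessing) and $W$ (chosen in the valuations) must be tuned so that every sub-case contradiction closes simultaneously.
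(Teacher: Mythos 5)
Your reduction is correct and is essentially the same as the paper's: both reduce from $(k,M)$-\textsc{Vector-Sum} to a star with one ``collector'' agent taking the center plus $k$ vector leaves and $d$ pairs of coordinate agents whose personal dummy leaves enforce the $\le$ and $\ge$ inequalities per coordinate, with $p=k+2d+1$ and $|A|=2d+1$. The only differences are in how the rigidity is engineered (the paper pads every vector-leaf and center valuation by an additive $n^3$ and gives the collector uniform valuation $1$, whereas you use a preprocessing shift, a complement trick with $W$, and concentrate the collector's valuation on the center), and these are interchangeable tunings of the same construction.
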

    
    To address the intractability implied by Theorem~\ref{thm:w1_hardness}, we approach the problem from the perspective of approximate fairness. The study of finding \emph{approximately fair} allocations, meaning allocations that are at most a small multiplicative (or additive factor) unfair to the agents, was initiated by Lipton et al.~\cite{approx_cake_lipton2004approximately} in the context of cake-cutting and has since been investigated in other settings. For a comprehensive overview of this topic, we refer the reader to the survey by Amanatidis et al.~\cite{survey_amanatidis2023fair}, which discusses approximate fairness in various settings. In our case, an allocation is \emph{$\varepsilon$-envy-free} if, for every agent, the value they assign to their own bundle scaled up by a multiplicative factor of $(1+\varepsilon)$, is at least as large as the value they assign to any other agent's bundle. 
    
    In this setting, it is very natural to apply a geometric rounding scheme to round the involved numbers to the nearest power of $(1+\varepsilon)$ and make some guesses regarding the agents valuation of allocated items. This is especially tempting given that the hardness of the instance in the proof of Theorem~\ref{thm:w1_hardness} arises precisely from the inability to iterate over all possible valuations that agents may assign to the allocated items. This does in fact lead to a straightforward FPT algorithm, parameterized by $p + |A|$, that finds a valid and $\varepsilon$-envy-free allocation or correctly concludes that the input instance has no valid and envy-free allocation, if the input numbers are provided in unary representation. But, in the case where the input numbers are provided in binary representation, the agents could have wildly different valuations for the same items and consequently the number of guesses will no longer be FPT in terms of the relevant parameters. In fact, due to this reason, for some time we believed that there exists a positive real number $c$ such that, assuming the Exponential Time Hypothesis~\cite{IMPAGLIAZZO2001367}, no FPT algorithm can either find a valid and $c$-envy-free allocation or correctly conclude that the input instance admits no valid and envy-free allocation. Surprisingly, this is not the case -- the problem admits an algorithm with the aforementioned guarantees that runs in FPT time parameterized by $p$ and $|\mathcal{A}|$, the number of \emph{agent types}, which is a smaller parameter than the number of agents.

    \begin{restatable}{theorem}{EPAS}\label{thm:epas_param_by_p_and_|A|}
        There exists an algorithm that, given an instance $(G, A, \mathcal{U}, p)$ of {\sc Envy-Free Incomplete Connected Fair Division} with input numbers in binary representation, and a positive real number $\varepsilon$, runs in time $(\frac{1}{\varepsilon}\log\frac{p}{\epsilon})^{\mathcal{O}(|\mathcal{A}|p^2)}(mn)^{\mathcal{O}(1)}$ and outputs either a valid and $\varepsilon$-envy-free allocation $\Pi$, or correctly states that $(G, A, \mathcal{U}, p)$ does not admit a valid and envy-free allocation.
    \end{restatable}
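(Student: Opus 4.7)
The plan is a guess-and-check algorithm: enumerate approximate ``value signatures'' of candidate allocations, then for each test realizability in the input graph. As a first step I would enumerate combinatorial templates specifying the number of bundles $\ell \le p$, their sizes $s_1, \ldots, s_\ell$ summing to $p$, and the agent type $t_j \in \mathcal{A}$ receiving each bundle $B_j$; this contributes a $(p|\mathcal{A}|)^{O(p)}$ factor, which will be subsumed by the eventual bound.

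The key structural observation driving the discretization is the following: in any $\varepsilon$-envy-free allocation, for every agent type $t$ and every allocated vertex $v$ one must have $u_t(v) \le (1+\varepsilon)V_t$, where $V_t$ denotes the value $t$ assigns to its own bundle---otherwise the bundle containing $v$ already violates $\varepsilon$-envy-freeness for $t$. The normalized value $u_t(v)/V_t$ therefore lies in $[0, 1+\varepsilon]$, and values below $\varepsilon/p$ contribute at most $\varepsilon V_t$ in aggregate to any size-$p$ bundle and can safely be rounded to $0$. Rounding the remaining normalized values to powers of $(1+\varepsilon)$ leaves $O(\tfrac{1}{\varepsilon}\log\tfrac{p}{\varepsilon})$ classes per triple (agent type, bundle, vertex slot within the bundle), producing the $(\tfrac{1}{\varepsilon}\log\tfrac{p}{\varepsilon})^{O(|\mathcal{A}|p^2)}$ enumeration factor. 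Given such a signature $(\tilde{u}_{t,j,i})$, $\varepsilon$-envy-freeness can be verified directly: for each agent type $t$ the normalized values within $t$'s own bundle must sum to approximately $1$, and within every other bundle to at most $1+\varepsilon$. Running the procedure with a slightly tighter internal $\varepsilon$ yields both completeness (every true envy-free allocation rounds to an accepted signature) and soundness (every accepted realizable signature yields an $\varepsilon$-envy-free allocation).

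The main obstacle is the realization step: given a signature that passes the envy-freeness check, decide whether there exist vertex-disjoint connected subgraphs $B_1, \ldots, B_\ell \subseteq V(G)$ with $|B_j| = s_j$ together with scale factors $V_t$ such that the vertex placed at position $(j, i)$ satisfies $u_t(v) \approx V_t\,\tilde{u}_{t,j,i}$ for every $t$. The unknown scales $V_t$ couple all bundles globally, but because only $p$ vertices are allocated in total, each $V_t$ is pinned down (up to the fixed rounding) once the vertices of $t$'s own bundle are chosen. I plan to handle this via a dynamic program over connected vertex subsets of size at most $p$---enumerable in $n \cdot 2^{O(p)}$ time by standard techniques---combined with a color-coding step to enforce disjointness across the $\ell \le p$ bundles: once $t$'s bundle is committed, the induced $V_t$ is used to filter which graph vertices are admissible for the remaining positions. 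Once the template and signature are fixed, realizability is then decided in $(mn)^{O(1)}$ time, and the overall running time matches the claim.
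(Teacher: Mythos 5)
Your discretization is essentially the paper's: the relative quantities $u_t(v)/V_t$, truncated below $\varepsilon/p$ and rounded to powers of $(1+\varepsilon)$, are exactly what the paper encodes in its function $\Lambda$ of the ``bucket profile,'' and the resulting $(\frac{1}{\varepsilon}\log\frac{p}{\varepsilon})^{\mathcal{O}(|\mathcal{A}|p^2)}$ enumeration matches. The gap is in the realization step, and it is not a technicality --- it is the central difficulty of the binary-valuation setting. You correctly note that the unknown scales $V_t$ couple all bundles globally, but your proposed fix (``once $t$'s bundle is committed, the induced $V_t$ is used to filter which vertices are admissible for the remaining positions'') does not break the coupling. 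Whether a vertex $v$ is admissible for type $t$'s own bundle depends on the scales $V_{t'}$ of every \emph{other} type $t'$, since you need $u_{t'}(v)\approx V_{t'}\,\tilde u_{t',j,i}$ for all $t'$; so you cannot commit $t$'s bundle before knowing the other scales, and if you instead carry the induced scales as DP state, each $V_t$ ranges over exponentially many values when the input is in binary, so the table is no longer of size $(mn)^{\mathcal{O}(1)}$. Enumerating the scales up front does not work either: rounding $V_t$ to powers of $(1+\varepsilon)$ still leaves $\Theta(\frac{1}{\varepsilon}\log(pM))$ candidates per agent, which for binary $M$ yields an $(mn)^{\Theta(|A|)}$-type factor --- XP rather than FPT, which is precisely the obstruction the paper highlights in its introduction. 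Your proposal asserts that realizability ``is then decided in $(mn)^{O(1)}$ time'' without an argument, and this is exactly where one is needed.

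The paper's resolution is an iteratively decreasing target vector $\bar\mu$ of per-agent upper bounds on the (rounded) log of each agent's own bundle value, initialized at $\lceil\log_q(pM)\rceil$, in the style of Gan et al.'s envy-free house-allocation algorithm. Given the current $\bar\mu$ together with the guessed relative profile, the eligibility of every vertex is fully determined, the agents decouple, and each agent's bundle is sought independently as a maximum-weight connected colorful subgraph (a weighted Graph Motif computation). The key lemma is one-sided: if the search fails for agent $i$, then no valid envy-free allocation respecting the guesses attains the current target for $i$, so $\mu_i$ can be safely decremented; the loop therefore terminates after polynomially many iterations even though the scales were never guessed, and if it ever succeeds for all agents simultaneously the output is $\varepsilon$-envy-free. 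Some version of this ``solve or provably shrink the target'' argument is what your realization step is missing; without it the proposal does not yield the claimed running time.
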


    We now give a sketch of the algorithm asserted in Theorem~\ref{thm:epas_param_by_p_and_|A|}. The algorithm begins by applying a simple reduction rule that allows us to upper bound the number of agents by a function of $p$ and $|\mathcal{A}|$. After this, we proceed to apply the standard technique of color coding~\cite{color_coding_10.1145/210332.210337} (see also~\cite{parameterized_algorithms_10.5555/2815661}) to color the vertices of the graph with $p$ colors, and make some guesses regarding the color of the vertices that end up being assigned to each agent. 
    
    At this point, we draw inspiration from the polynomial-time algorithm of Gan et al.~\cite{house_allocation_gan2019envy}, which addresses the special case where each agent is allocated exactly one item, sometimes referred to as the \emph{house allocation} problem. To illustrate, consider a more restricted setting in which every agent is assigned a unique color and must be allocated exactly one vertex which has the same color. In this setting, for each agent, we consider the set of items they value the most. If there exists an agent such that none of their most-valued items have the same color as them, then these items can be safely discarded, as allocating them to any agent would make this agent envious. On the other hand, if for every agent there exists at least one most-valued item that has the same color as them, then assigning each agent such an item yields an envy-free allocation. The key idea here is that we can either find an envy-free allocation, or restrict the search-space by a significant amount. 
    
    Our setting is more technically challenging since we need to work with approximate envy-freeness, enforce connectivity constraints, and accommodate agents that receive multiple items. For this reason, we encode the restriction of the search space using a \emph{target} vector $\bar{\mu} = (\mu_1,\dots,\mu_{|A|})$, and an allocation \emph{attains} the target for agent $i$, if the valuation that agent $i$ has for their own bundle is between $(1+\varepsilon)^{\mu_i-1}$ and $(1+\varepsilon)^{\mu_i}$. The algorithm guesses a set of properties (of which there are only FPT many) of the valid and envy-free allocation we are searching for, sets the target vector so that $(1+\varepsilon)^{\mu_i} \geq pM_i$, where $M_i$ denotes the valuation of agent~$i$’s favorite item, and runs a subroutine that searches for a valid and envy-free allocation that respects these guesses. 
    
    The subroutine has an outer loop which, following the key idea from the simpler setting, either finds a solution or decreases the target in each iteration. To prove correctness, we must show that if there exists a valid and envy-free allocation that respects the guesses and attains a target that is at most the current one, then the algorithm does one of the following: it either returns a valid and $\varepsilon$-envy-free allocation, or identifies an agent $i$ such that no valid and envy-free allocation that respects the guesses and attains a target that is at most the current one attains the current target for agent $i$. We design a procedure that achieves this by observing that the guesses constrain the instances to the point that the allocations corresponding to each agent can be searched for independently of each other. Finding such a feasible assignment for each individual agent is done by reducing to a weighted version of the well studied {\sc Graph Motif} problem, first introduced by Lacroix et al.~\cite{lacroix2006motif}. This concludes our sketch.

    We remark that a subtle but crucial aspect for the correctness of the algorithm is that the feature guessing must be performed outside the subroutine. Additionally, we draw the reader’s attention to the fact that we consider two subtly different versions of the problem. In the optional setting, agents may be assigned an empty bundle, whereas in the mandatory setting, each agent must receive a non-empty set of items. Gahlawat and Zehavi~\cite{icfd_gahlawat2023parameterized} study only the mandatory setting. Our results apply to both the optional and mandatory settings. We first establish our results for the optional setting and then show how they extend to the mandatory case.
 
\subsection{Related Work}
    Fair division of indivisible goods have recieved a considerable amount of attention in the computational social choice literature, an interested reader can refer some of the recent surveys~\cite{survey_amanatidis2023fair,survey_suksompong2021constraints}. The work introducing the fair division of \emph{indivisible} items with \emph{connectivity} requirements on the players' bundles was carried out by Bouveret et al.~\cite{cfd_bouveret2017fair}. Their initial results inspired a series of follow-up studies on the model. This body of work includes studies on conditions guaranteeing the existence of fair allocations~\cite{almost_ef_bilo2022almost}, achieving various fairness notions such as Pareto-optimal allocations~\cite{pareto_optimal_igarashi2019pareto} and maximin share allocations~\cite{unary_greco2020complexity}. Additionally, Aziz et al.~\cite{chores_aziz2022fair} examined scenarios where the items to be allocated could be either goods or bads. 
    There has also been work, examining the \emph{price of connectivity}~\cite{price_of_connectivity_bei2022price}, which evaluates the loss in fairness when allocation are required to be connected, when compared to an optimal disconnected allocation. The study of connected fair division where only a subset of the items are to be allocated, was initiated by Gahlawat and Zehavi~\cite{icfd_gahlawat2023parameterized}. Related notions that have been explored in literature include \emph{donating} a few items to achieve fairness~\cite{charity_approx_prop_ef_caragiannis2022little} and imposing \emph{cardinality constraints} on the number of items of different types that an agent can receive~\cite{cardinality_constraints_biswas2018fair}.

    The concept of finding \emph{approximately fair} allocations that are at most a small multiplicative (or additive factor) unfair to the agents was introduced by Lipton et al.~\cite{approx_cake_lipton2004approximately} in the concept of cake-cutting and later on by Budish~\cite{10.1145/1807406.1807480} in the context of indivisible goods. Since then, it has been studied across a variety of settings and fairness notions, including envy-freeness~\cite{plaut2020almost}, Nash social welfare~\cite{Akrami_Chaudhury_Hoefer_Mehlhorn_Schmalhofer_Shahkarami_Varricchio_Vermande_Wijland_2022}, and max–min fairness~\cite{10.1145/3391403.3399526, feige2021tight}, among others. 

    $u_i(v_c)$

    The area of parameterized complexity has found notable use in fair division literature, as many fundamental problems in this domain are (unsurprisingly) NP-hard. The works most relevant to ours include that of Deligkas et al.~\cite{param_cfd_deligkas2021parameterized}, which provides a comprehensive overview of the parameterized complexity of {\sc$\varphi$-CFD} with respect to the natural parameters as well as several structural parameters of the graph for various fairness notions, and that of Gahlawat and Zehavi~\cite{icfd_gahlawat2023parameterized} which does the same for for the {\sc $\varphi$-ICFD} problem. There is a large body of work separate from this~\cite{nguyen2023fair, param_ef_bliem2016complexity}, which consider fair division problems from a parameterized perspective with the added requirement of Pareto-optimality. The works of Bredereck et al.~\cite{bredereck2019high,bredereck2023high} apply $N$-fold Integer Programming to solve problems concerning envy-free and Pareto-efficient allocations of items, with the latter even handling instances where input numbers are given in binary representation. Furthermore, the problem of finding \emph{locally} envy-free allocations, where the input includes a graph in which vertices represent agents, and an agent can only envy their neighbors in the graph have also been explored~\cite{local_ef_beynier2019local, local_ef_bredereck2022envy}. 

\section{Preliminaries}\label{sec:prelims}
For a positive integer $n$, let $[n]$ be the set $\{1,2,\dots,n\}$, and $[a,b]$ denote the set of integers $i\in\mathbb{Z}$ that satisfies $i\leq b$ and $i\geq a$. Let $\mathbb{N}_{\geq 0} = \mathbb{N}\cup\{0\}$. For vectors $\bar{v}^{(1)}, \bar{v}^{(2)} \in \mathbb{Z}^n$, we write $\bar{v}^{(1)} \leq \bar{v}^{(2)}$ (respectively, $\bar{v}^{(1)} = \bar{v}^{(2)}$) to mean that $v^{(1)}_i \leq v^{(2)}_i$ (respectively, $v^{(1)}_i =v^{(2)}_i$), where $v^{(1)}_i$ and $v^{(2)}_i$ denote the $i$-th coordinates of $\bar{v}^{(1)}$ and $\bar{v}^{(2)}$, for every $i \in [n]$. Furthermore, we use $\bar{v}^{(1)} + \bar{v}^{(2)}$ to denote the vector $\Bar{v} \in \mathbb{Z}^n$ that satisfies $v_i = v^{(1)}_i + v^{(2)}_i$ for every $i\in [n]$. We consider finite, simple, undirected graphs. For standard graph terminology we refer to the book by Diestel \cite{Diestel_10.5555/3134208}. We make a slight departure from standard notations by denoting the number of vertices in a graph by $m$.

%
%

An input instance of the {\sc Envy-Free Incomplete Connected Fair Division} problem consists of a tuple $(G,A,\mathcal{U},p)$, where,\\[-8pt]
\begin{itemize}
    \item $G$, the \emph{utility graph}, is a graph on $m$ vertices
    \item $p \in [0, m]$ is a non-negative integer
    \item $A = [n]$ is the set of \emph{agents}, where $n\in\mathbb{N}$
    \item $\mathcal{U}$ is an $n$-tuple of \emph{valuation functions} $\{ u_i:V(G)\rightarrow\mathbb{N}_{\geq 0}\}_{i\in[n]}$
\end{itemize}

The vertices of $G$ are referred to as \emph{items}. As is standard in literature, we assume that agents have \emph{additive valuations} and use $u_i(S)$ to denote $\sum_{v\in S}u_i(v)$ for any $S\subseteq V(G)$.
An \emph{allocation} of items is a tuple $\Pi = (\pi_1,\dots,\pi_n)$, where $\pi_i \subseteq V(G)$ is referred to as the \emph{bundle} of agent $i$. When clear from the context, we sometimes abuse notations and use $\pi$ to denote $\bigcup_{i\in A}\pi_i$, the set of allocated vertices. An allocation is \emph{valid} if it allocates exactly $p$ vertices in total, with each bundle being connected and no vertex allocated to multiple bundles. An allocation is \emph{envy-free} if no agent desires another agent's bundle more than their own. 
Formally, an allocation that is both valid and envy-free must satisfy,\\[-8pt]
\begin{itemize}
    \item $\pi_i\cap\pi_j = \emptyset$ for every pair of distinct agents $i,j$
    \item $\sum_{i=1}^{n}\lvert\pi_i\rvert = p$
    \item $G[\pi_i]$ is connected
    \item $u_i(\pi_i)\geq u_i(\pi_j)$ for every pair of agents $i,j$
\end{itemize}

Similarly, an allocation is \emph{$\varepsilon$-envy-free}, for some positive real number $\varepsilon$, if it satisfies $(1+\varepsilon)u_i(\pi_i)\geq u_i(\pi_j)$ for every agent $i$ and $j$. Given an instance $\mathcal{I} = (G, A, \mathcal{U}, p)$, the {\sc Envy-Free Incomplete Connected Fair Division} problem asks whether there exists an allocation in $\mathcal{I}$ that is both valid and envy-free. 
Here we remark that there are two subtly different versions of the problem. In the \emph{mandatory} setting, each agent must be assigned a non-empty bundle; that is, $|\pi_i| > 0$ for every agent $i \in A$. In contrast, the \emph{optional} setting allows agents to receive empty bundles, meaning $|\pi_i| \geq 0$ for all $i \in A$. Throughout this paper, we use the shorthand \icfd\ to refer to the envy-free incomplete connected fair division problem in the optional setting.

Two agents $i,j$ have the same \emph{type} if $u_i(v) = u_j(v)$ for every $v\in V(G)$. We use $\mathcal{A}_{\mathcal{I}}$ to denote the set of all agent types in an instance $\mathcal{I}$ of \icfd. Thus $\mathcal{A}_{\mathcal{I}}$ is a partition of $A$ such that two agents $i$, $j$ belong to the same part $\boldsymbol{a}\in\mathcal{A}_{\mathcal{I}}$ if and only if they have the same type. For ease of notation, we shall drop the subscript $\mathcal{I}$ whenever it is clear from the context. We say that an instance $(G, A, \mathcal{U}, p)$ of \icfd\ is \emph{in unary} (respectively, \emph{in binary}) if the valuation functions in $\mathcal{U}$ are encoded in unary (respectively, binary). This means there exists a polynomial (respectively, exponential) function $f$ such that $\max\{ u_i(v)\; |\; i\in A,\, v\in V(G)\} \leq f(|V(G)|)$.

For standard terminology in parameterized algorithms, we refer to the book Parameterized Algorithms \cite{parameterized_algorithms_10.5555/2815661} by Cygan et al. An \emph{$(n, k)$-perfect hash family} \cite{splitter_10.5555/795662.796315} is a family of functions from $[n]$ to $[k]$ such that for every $S \subseteq [n]$ with $|S| = k$, there exists a function in the family that is injective on $S$. The following is known regarding $(n, k)$-perfect hash families.

\begin{proposition}[\cite{splitter_10.5555/795662.796315}, \cite{parameterized_algorithms_10.5555/2815661}]\label{prop:n_k_perfect_hash_family}
    For any $n,k \in \mathbb{N}$, one can construct an $(n, k)$-perfect hash family of size $e^k k^{\mathcal{O}(\log k)}\log n$ in time $e^k k^{\mathcal{O}(\log k)}n\log n$.
\end{proposition}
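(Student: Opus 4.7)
The plan is to prove the proposition by the classical two-stage construction: first reduce the universe from $[n]$ to a small domain of size polynomial in $k$, then build a perfect hash family on the small domain, and finally compose the two.

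For the first stage, the plan is to construct an $(n,k,k^2)$-\emph{splitter}, i.e., a family $\mathcal{F}_1$ of functions $[n]\to[k^2]$ such that for every $S\subseteq[n]$ with $|S|=k$, some $h\in\mathcal{F}_1$ is injective on $S$. This is done via the Naor--Schulman--Srinivasan construction (the ``splitter'' reference already cited in the statement): using error-correcting codes of rate roughly $1/k$ over an alphabet of size $k^2$, one obtains a family of size $k^{\mathcal{O}(\log k)}\log n$ in time $k^{\mathcal{O}(\log k)}n\log n$. A key feature of the construction I would use here is that the dependence on $n$ is only $\log n$ (and one factor of $n$ in the time for enumerating the domain), with no $e^k$ overhead, since reducing to a universe of size $k^2$ is much easier than hashing all the way down to $k$.

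For the second stage, the plan is to build a $(k^2,k)$-perfect hash family $\mathcal{F}_2$ of size $e^k k^{\mathcal{O}(\log k)}$. For any fixed $T\subseteq[k^2]$ with $|T|=k$, a uniformly random $g:[k^2]\to[k]$ is injective on $T$ with probability $k!/k^k=\Theta(e^{-k}/\sqrt{k})$ by Stirling's formula. Since $\binom{k^2}{k}\le k^{2k}$, a direct probabilistic argument combined with a union bound already yields a family of size $\mathcal{O}(e^k\cdot k^{2k+1})$, which is only polynomially off from the target. To obtain the sharper $e^k k^{\mathcal{O}(\log k)}$ bound with an \emph{explicit} construction, I would draw the random functions from a $k$-wise independent sample space of size $k^{\mathcal{O}(k)}$ (so that the $k!/k^k$ probability of injectivity is preserved exactly on every $T$), and then apply the method of conditional expectations to iteratively select functions that cover all remaining $k$-subsets of $[k^2]$ on which no currently chosen function is injective; standard analysis shows the number of iterations is $e^k k^{\mathcal{O}(\log k)}$, and each iteration can be carried out in time $e^k k^{\mathcal{O}(\log k)}$ because the domain is of size only $k^{\mathcal{O}(1)}$.

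Composition finishes the proof: define $\mathcal{F}=\{\,g\circ h:h\in\mathcal{F}_1,\,g\in\mathcal{F}_2\,\}$. For any $S\subseteq[n]$ with $|S|=k$, pick $h\in\mathcal{F}_1$ injective on $S$; then $T:=h(S)$ has $|T|=k$, and some $g\in\mathcal{F}_2$ is injective on $T$, so $g\circ h$ is injective on $S$. Thus $\mathcal{F}$ is an $(n,k)$-perfect hash family, its size is $|\mathcal{F}_1|\cdot|\mathcal{F}_2|=e^k k^{\mathcal{O}(\log k)}\log n$, and the overall construction time is dominated by the first stage's $n\log n$ factor multiplied by the $e^k k^{\mathcal{O}(\log k)}$ cost of applying all functions in $\mathcal{F}_2$, giving the stated $e^k k^{\mathcal{O}(\log k)} n \log n$ bound.

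The step I expect to be the main obstacle is the sharp second-stage bound: the naive union-bound argument overshoots by a $k^{\mathcal{O}(k)}$ factor, and shaving this down to $k^{\mathcal{O}(\log k)}$ requires the derandomization via a $k$-wise independent sample space together with a careful conditional-expectations argument. Everything else--the splitter from the cited construction and the final composition--is essentially bookkeeping, but care must be taken in the composition step to ensure that the factors multiply (rather than compound in the exponent of $k$), which is exactly why the universe reduction down to $k^2$ is performed first.
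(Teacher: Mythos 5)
First, note that the paper does not prove this proposition at all: it is imported as a black box from Naor--Schulman--Srinivasan and the textbook of Cygan et al., so there is no in-paper argument to compare against. Your outline --- reduce the universe to size $k^2$ with a splitter, build a $(k^2,k)$-perfect hash family on the small universe, and compose --- is exactly the architecture of the cited construction, and your composition step, together with the resulting size bound $|\mathcal{F}_1|\cdot|\mathcal{F}_2| = e^k k^{\mathcal{O}(\log k)}\log n$, is correct.

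There is, however, a genuine gap in your second stage, and it is precisely the step you flag as the main obstacle. Drawing $g$ from a $k$-wise independent sample space does preserve the injectivity probability $k!/k^k \geq e^{-k}$ on every fixed $k$-subset $T\subseteq[k^2]$, and a greedy covering argument then bounds the family size by roughly $e^k\cdot\ln\binom{k^2}{k} = e^k k^{\mathcal{O}(1)}$, so the \emph{size} claim is fine (better than needed, in fact). But the \emph{construction time} is not: whether you enumerate the sample space (which has size $k^{\Theta(k)}$) or run conditional expectations coordinate by coordinate, each iteration must evaluate or update the collection of still-uncovered $k$-subsets of $[k^2]$, of which there are $\binom{k^2}{k} = k^{\Theta(k)}$. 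The second stage therefore runs in time $k^{\Theta(k)}$, not $e^k k^{\mathcal{O}(\log k)}$, and the overall time bound $e^k k^{\mathcal{O}(\log k)}\, n\log n$ claimed by the proposition is not met (this matters whenever $n$ is not itself of order $k^{\Theta(k)}$). The cited construction avoids exactly this trap: instead of hashing $[k^2]$ down to $[k]$ in one shot, it uses further splitters to break the $k$-set into $\Theta(k/\log k)$ blocks of size $\mathcal{O}(\log k)$ and only ever brute-forces perfect hash families for sets of size $\mathcal{O}(\log k)$ over universes of size $k^{\mathcal{O}(1)}$, which costs $k^{\mathcal{O}(\log k)}$ per object; no enumeration of $k$-subsets of $[k^2]$ is ever performed, and the $e^k$ factor emerges from the $k^k/k!$ term in the block-counting analysis rather than from a union bound over all $k$-subsets. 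Replacing your second stage with this block decomposition (or simply invoking the cited theorem, as the paper does) repairs the argument; the rest of your proposal goes through.
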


An \emph{Efficient Parameterized Approximation Scheme} (EPAS) for an optimization problem $\Pi$ with input size $n$ and parameter $k$ is an algorithm that, for every $\varepsilon > 0$, runs in time $f(k, 1/\varepsilon)\cdot n^{\mathcal{O}(1)}$ for some computable function $f$, and returns a $(1+\varepsilon)$-approximate solution (or $(1-\varepsilon)$-approximate solution for maximization problems) to $\Pi$.

\section{W[1] Hardness of \icfd\ Parameterized by p and  $\lvert A\rvert$}\label{sec:hardness}
To show W[1]-hardness of \icfd\ parameterized by $p$ and $|A|$, we reduce from {\sc $(k,M)$-Vector-Sum}, a multidimensional version of the classic NP-hard problem Subset Sum. We begin by formally defining the problem and stating a known result regarding its complexity.

\begin{definition}
    For positive integers $k$, $n$, $M$, and $d$, the \text{{\sc $(k,M)$-Vector-Sum}} problem is defined as follows: given a tuple $(\mathcal{W}, \bar{t}, k)$, where $\mathcal{W} = \{w^{(i)}\}_{i\in [n]} \subseteq [0, kM]^d$, and $\Bar{t}\in [0, kM]^d$, determine whether there exists a subset $S \subseteq [n]$ of cardinality $k$, such that $\sum_{i \in S} w^{(i)} = \bar{t}$.
\end{definition}

\begin{proposition}[Lemma 4.1 - \cite{mss_w1_abboud2014losing}]\label{prop:MSS_w1_hardness}
    \text{{\sc $(k,M)$-Vector-Sum}} is W[1]-hard parameterized by $k+d$ even when $M = k\,n^{1+o(1)}$.
\end{proposition}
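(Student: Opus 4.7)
The plan is to reduce from $k$-\textsc{Multicolored-Clique}, which is well known to be W[1]-hard parameterized by $k$. Given an input graph $H$ whose vertex set is partitioned into color classes $V_1,\dots,V_k$, each of size at most $n_0$ and with the vertices of $V_i$ assigned distinct identifiers $\mathrm{id}(v)\in[n_0]$, I would build an instance $(\mathcal{W},\bar{t},k')$ of $(k',M)$-\textsc{Vector-Sum} with solution size $k'=\binom{k}{2}+k$, dimension $d=\binom{k}{2}+2k$, and entries bounded by a polynomial in $n_0$.

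The coordinates split into \emph{slot} dimensions $S_{ij}$ for each pair $1\le i<j\le k$ together with two \emph{consistency} dimensions $D_i^{(1)},D_i^{(2)}$ for every color $i$. For each edge $e=\{u,v\}\in E(H)$ with $u\in V_i,\,v\in V_j,\,i<j$, I would add an \emph{edge vector} that carries a $1$ in $S_{ij}$, carries $\mathrm{id}(u)$ in $D_i^{(1)}$ and $\mathrm{id}(u)^2$ in $D_i^{(2)}$, carries $\mathrm{id}(v)$ in $D_j^{(1)}$ and $\mathrm{id}(v)^2$ in $D_j^{(2)}$, and is zero everywhere else. For each color $i$ and each vertex $w\in V_i$, I would add an \emph{identity vector} that carries $C-(k-1)\,\mathrm{id}(w)$ in $D_i^{(1)}$ and $C'-(k-1)\,\mathrm{id}(w)^2$ in $D_i^{(2)}$, with shifts $C=k\,n_0$ and $C'=k\,n_0^2$ (each strictly larger than the maximum possible total contribution of any $k-1$ edge endpoints to the respective dimension). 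The target $\bar{t}$ carries $1$ in every slot dimension, $C$ in every $D_i^{(1)}$, and $C'$ in every $D_i^{(2)}$.

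For the forward direction, a $k$-clique in $H$ whose vertex in color $i$ is $v_i$ yields a solution consisting of the $\binom{k}{2}$ corresponding edge vectors together with the $k$ identity vectors for $v_1,\dots,v_k$, which sums to $\bar{t}$ by inspection. In the reverse direction, the slot constraints force exactly one edge vector to be picked per pair $\{i,j\}$. Writing $t_i$ for the number of identity vectors of color $i$ picked, $W_i$ for this set, and $x_{i,1},\dots,x_{i,k-1}$ for the ids of the chosen edge endpoints in $V_i$, the $D_i^{(1)}$ equation rearranges to $\sum_s x_{i,s}=(1-t_i)C+(k-1)\sum_{w\in W_i}\mathrm{id}(w)$. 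Because $C>(k-1)n_0\ge\sum_s x_{i,s}$, the case $t_i=0$ is infeasible; combined with $\sum_i t_i = k'-\binom{k}{2}=k$, this forces $t_i=1$ for every $i$ (any deviation forces some color with $t_j=0$ by pigeonhole). Letting $w_i$ be the unique identity vertex picked for color $i$, the two consistency equations become $\sum_s x_{i,s}=(k-1)\,\mathrm{id}(w_i)$ and $\sum_s x_{i,s}^2=(k-1)\,\mathrm{id}(w_i)^2$, and the equality case of Cauchy--Schwarz (equivalently, the power-mean inequality) forces every $x_{i,s}$ to equal $\mathrm{id}(w_i)$. Hence each chosen edge touches $w_i$ in color $i$, and $\{w_1,\dots,w_k\}$ is a $k$-clique in $H$.

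The main obstacle I expect is keeping entries small enough that $M\le k\,n^{1+o(1)}$ with $n=|\mathcal{W}|$. A one-dimensional id encoding on its own admits spurious consistency solutions (e.g.\ $1+3=2\cdot 2$), and the obvious remedy through a Sidon-type (order-$(k-1)$) set inflates the entries to $n_0^{\Theta(k)}$, which overshoots the allowed regime. The two-dimensional $(\mathrm{id},\mathrm{id}^2)$ trick circumvents this via convexity and keeps the largest entry at $C'=O(k\,n_0^2)$; since $|\mathcal{W}|=O(n_0^2)$, setting $M=\Theta(n_0^2)=\Theta(n)$ guarantees $kM$ dominates every entry while keeping $M$ well inside $k\,n^{1+o(1)}$. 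Both $k'$ and $d$ are $O(k^2)$, so the construction is an FPT reduction from parameter $k$ to parameter $k'+d$, delivering the asserted W[1]-hardness.
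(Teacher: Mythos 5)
A point of order first: the paper does not prove this proposition at all --- it imports it verbatim as Lemma~4.1 of Abboud, Lewi and Williams and uses it as a black box, so there is no in-paper argument to compare yours against. What you have written is a self-contained reduction from \textsc{Multicolored Clique}, which is in the same general spirit as the Abboud--Lewi--Williams proof (a clique-to-vector-sum reduction built from edge vectors and per-color ``selector'' vectors with pair-indexed coordinates), but the details are your own. In particular, the $(\mathrm{id},\mathrm{id}^2)$ convexity trick --- enforcing consistency with only two extra coordinates per color via the equality case of Cauchy--Schwarz --- is a clean way to avoid Sidon-type encodings and keep the largest entry at $O(k\,n_0^2)$ rather than $n_0^{\Theta(k)}$. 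The combinatorial core is correct: the slot coordinates force exactly one edge vector per pair; the $D_i^{(1)}$ equation rules out $t_i=0$ because $C=kn_0$ strictly exceeds $(k-1)n_0\ge\sum_s x_{i,s}$, and $\sum_i t_i=k$ then pins every $t_i$ to $1$; the two consistency equations together with the equality case of Cauchy--Schwarz force every $x_{i,s}$ to equal $\mathrm{id}(w_i)$, and distinctness of identifiers inside $V_i$ identifies each chosen edge's endpoint in color $i$ with $w_i$. All entries are non-negative and bounded by $kn_0^2$, and $k'+d=O(k^2)$, so this is a legitimate parameterized reduction.

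The one genuine gap is in your final paragraph, where you assert $M=\Theta(n_0^2)=\Theta(n)$. You only have the upper bound $n=|\mathcal{W}|=|E(H)|+|V(H)|=O(k^2n_0^2)$; if $H$ is sparse, $n$ can be as small as $\Theta(kn_0)$ while $M=\Theta(n_0^2)$ is roughly quadratic in $n$, which is not $k'\,n^{1+o(1)}$. Since the proposition is a hardness claim \emph{restricted} to instances with $M$ this small, the reduction must actually land in that regime, so this needs to be repaired. The repair is routine: pad $\mathcal{W}$ with $n_0^2$ dummy vectors that can never appear in a solution (e.g.\ copies of the all-zero vector; the slot coordinates already force $\binom{k}{2}$ edge vectors, and your own counting argument shows the remaining $k$ picks must all be identity vectors, one per color, so no dummy is ever selected). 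This makes $n\ge n_0^2\ge M$ without affecting correctness, after which the bound $M\le k'\,n^{1+o(1)}$ holds and the proof goes through.
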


With the definition of $(k, M)$-\textsc{Vector-Sum} in place, we are now prepared to describe the reduction.

\subsection{The Reduction}\label{subsec:reduction}

Let $\mathcal{I}_{\mathrm{vs}} = (\mathcal{W}, \bar{t}, k)$ be an instance of $(k,n^{1+o(1)})$-\textsc{Vector-Sum} with $d$ dimensions. We define the following instance $\mathcal{I}_{\scriptscriptstyle \mathrm{FD}} = (G,A,\mathcal{U},p)$  of \icfd:

\begin{itemize}
    \item Let $G$ be a star with center $c$, and $2d+n$ leaves. We assign a unique label to each leaf using elements of the set $V_{\alpha} \cup V_{\beta} \cup \{v_1, v_2, \dots, v_n\}$, where $V_{\alpha} = \{a_1, a_2, \dots, a_d\}$ and $V_{\beta} = \{b_1, b_2, \dots, b_d\}$.  With a slight abuse of notation, we use the label to refer to the corresponding vertex.

    \item Let $A = [2d + 1]$. We assign a unique label to each agent using elements of the set $\{A_1, A_2, \dots, A_d\} \cup \{B_1, B_2, \dots, B_d\} \cup \{C\}$. Here as well, we slightly abuse notation, and use the label to refer to the corresponding agent.
    
    
    \item $p = 2d+k+1$. 
    
    \item For every $l\in[d]$, we define $u_{A_l}(x) =  
        \begin{cases} 
            n^3+w^{(l)}_i & x = v_i\\
            n^3 & x = c \\
            0 & x\in V_{\beta}  \\
            0 & x\in V_{\alpha}\setminus\{a_l\}\\
            (k+1)n^3+t_l & x = a_l\\
        \end{cases}$
        
    \item For every $l\in[d]$, we define $u_{B_l}(x) =  
        \begin{cases} 
            n^3-w^{(l)}_i & x = v_i\\
            n^3 & x = c \\
            0 & x\in V_{\alpha}  \\
            0 & x\in V_{\beta}\setminus\{b_l\}\\
            (k+1)n^3-t_l & x = b_l \\
        \end{cases}$
    \item $u_C(v) = 1$\quad for every $v\in V(G)$
\end{itemize}


\subsection{Proof of Correctness}

We shall prove that $(G,p,A,\mathcal{U})$ has a valid and envy-free allocation if and only if $(\mathcal{W},\bar{t},k)$ has a subset $S\subseteq [n]$ of cardinality $k$, such that $\sum_{i\in S}w^{(i)} = \Bar{t}$. To this end, we make some observations about the properties that any valid and envy-free allocation in the instance $\mathcal{I}_{\scriptscriptstyle \mathrm{FD}}$ must satisfy. Note that since the graph is a star and since the bundles must be connected, only the agent who is allocated the center of the star gets more than one vertex. Furthermore, every agent must receive at least one vertex. Otherwise, as each agent has zero valuation for at most $2d-1$ vertices, and since a total of $2d+k+1$ vertices must be allocated, any agent with an empty bundle would be envious of some other agent. Hence, for exactly $2d+k+1$ vertices to be assigned in total, one agent must receive $k+1$ and the others must receive one vertex each. Now, since $C$ values every vertex equally, they must receive exactly $k+1$ vertices; otherwise, they would envy the agent who has more than one vertex. With this, we are ready to prove a key property of this instance.

\begin{lemma}\label{lem:heavy_vertices}
    In any valid and envy-free allocation $\Pi$ in $(G,p,A,\mathcal{U})$, every agent except $C$ must be allocated the vertex they value the most.
\end{lemma}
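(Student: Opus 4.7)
The plan is to argue by contradiction, building on the structural facts just established. From those facts, in any valid envy-free allocation $\Pi$, the agent $C$ must receive the center $c$ together with exactly $k$ additional leaves; writing $S := \pi_C \setminus \{c\}$, we have $|S|=k$, and every other agent receives exactly one leaf. By the symmetry of the construction between the $A$-side (involving $V_\alpha$ and the values $+w^{(l)}_i$, $+t_l$) and the $B$-side (involving $V_\beta$ and the values $-w^{(l)}_i$, $-t_l$), it suffices to derive a contradiction under the assumption that some fixed $A_l$ does not receive its top vertex $a_l$; the case that some $B_l$ fails to receive $b_l$ is handled by the mirror argument. The strategy is to show that under this assumption, $S$ cannot contain a $v$-vertex, a $V_\alpha$-vertex, or a $V_\beta$-vertex, which forces $S=\emptyset$ and contradicts $|S|=k\ge 1$.

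First, I would rule out $v$-leaves from $S$. If some $v_i\in S$, then $u_{A_l}(\pi_C)\ge u_{A_l}(c)+u_{A_l}(v_i)\ge 2n^3$, whereas $\pi_{A_l}$ is a single leaf distinct from $a_l$, so $u_{A_l}(\pi_{A_l})\le n^3+kM$. Using $M=k\,n^{1+o(1)}$, the bound $n^3+kM<2n^3$ holds for all sufficiently large $n$, so $A_l$ envies $C$ -- contradiction.

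Next, I would rule out $V_\alpha$-leaves. If some $a_{l^*}\in S$, then $u_{A_{l^*}}(\pi_C)\ge n^3+(k+1)n^3+t_{l^*}=(k+2)n^3+t_{l^*}$, while $\pi_{A_{l^*}}$ is a single leaf different from $a_{l^*}$ (since $a_{l^*}\in S\subseteq\pi_C$), so $u_{A_{l^*}}(\pi_{A_{l^*}})\le n^3+kM<(k+2)n^3+t_{l^*}$; thus $A_{l^*}$ envies $C$. Ruling out $V_\beta$-leaves proceeds in a completely parallel manner, via agent $B_{l^*}$ and the value $u_{B_{l^*}}(b_{l^*})=(k+1)n^3-t_{l^*}$, using once more that $t_{l^*}\le kM\ll n^3$. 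Combining the three steps yields $S=\emptyset$, the desired contradiction.

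The main point I expect to have to handle carefully is the arithmetic dominance $n^3\gg kM\ge t_l,w^{(l)}_i$ that underlies every envy comparison. This is exactly what the hypothesis $M=k\,n^{1+o(1)}$ in Proposition~\ref{prop:MSS_w1_hardness} buys us: for $n$ above a threshold depending only on $k$, the strict inequalities $n^3+kM<2n^3$, $n^3+kM<(k+2)n^3+t_{l^*}$, and $n^3+kM<(k+2)n^3-t_{l^*}$ hold simultaneously, and the finitely many small-$n$ instances can be dispatched directly without affecting W[1]-hardness. Once this size condition is in place, the three cases above combine mechanically and I do not anticipate further difficulty.
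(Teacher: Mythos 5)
Your proof is correct and follows essentially the same route as the paper's: both arguments first exclude $V_\alpha\cup V_\beta$ from $\pi_C$ via the identical envy comparison, and then use the fact that $C$'s bundle (center plus $v$-leaves) is worth too much to any single-leaf agent who misses their top vertex, relying on the same dominance $kM\ll n^3$ that the paper also leaves implicit. The only cosmetic difference is that you package the final step as a contradiction ($S=\emptyset$ versus $|S|=k$) using the bound $2n^3$ from the center plus one $v$-leaf, whereas the paper directly lower-bounds $u_X(\pi_C)\ge kn^3$ and concludes $X$ must hold its most-valued vertex.
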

\begin{proof}
    Let us consider agent $C$. Observe that, if $\pi_C$ contains a vertex $a_l$ for some $l\in [d]$, then it cannot be assigned to $A_l$, since $\Pi$ is valid. Therefore, irrespective of which vertex is assigned to $A_l$, we have $u_{A_l}(\pi_C) \geq (k+1)n^3 > u_{A_l}(\pi_{A_l})$, which contradicts the assumption that $\Pi$ is envy-free. Hence $\pi_C$ must not contain any vertex from the set $V_{\alpha}$ and by a similar argument from $V_{\beta}$. Now observe that, according to any agent $X$ distinct from $C$, we have $u_X(\pi_C)\geq (k+1)n^3 - k\,n^{1+o(1)} \geq kn^3$. Therefore, for $X$ to not be envious of $C$, we must have $u_X(\pi_X) \geq kn^3$. But this can only be achieved if $X$ is assigned the vertex they value the most. Hence, in the allocation $\Pi$, agent $A_l$ must receive $a_l$ and agent $B_l$ must receive $b_l$ for every $l \in [d]$. This concludes the proof of the lemma.
\end{proof}

 This gives us all the tools we need to prove the main lemma.
 
\begin{lemma}\label{lem:MSS_iff_ICFD}
   The instance $(G, p, A, \mathcal{U})$ admits a valid and envy-free allocation $\Pi$ if and only if there exists a subset $S \subseteq [n]$ of cardinality $k$ such that $\sum_{i \in S} w^{(i)} = \bar{t}$ in the instance $(\mathcal{W}, \bar{t}, k)$.

\end{lemma}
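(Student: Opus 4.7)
The plan is to leverage the rigid structure imposed by Lemma~\ref{lem:heavy_vertices} together with the preceding observations about bundle cardinalities. From that structural result, in any valid and envy-free allocation each agent $A_l$ must receive exactly $\{a_l\}$, each agent $B_l$ must receive exactly $\{b_l\}$, and agent $C$ must receive a connected bundle of size $k+1$. Since $G$ is a star, the only connected set of more than one vertex contains the center, so $\pi_C = \{c\} \cup \{v_i : i \in S\}$ for some $S \subseteq [n]$ with $|S| = k$. Hence the allocation is essentially determined up to the choice of $S$, and the entire lemma boils down to characterizing which sets $S$ give rise to envy-free allocations.

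For the forward direction, I would start from a valid envy-free allocation, extract $S$ as above, and apply the envy-freeness inequalities that $A_l$ and $B_l$ impose on $\pi_C$. A direct computation gives $u_{A_l}(\pi_C) = (k+1)n^3 + \sum_{i\in S} w^{(l)}_i$ and $u_{A_l}(\pi_{A_l}) = (k+1)n^3 + t_l$, so envy-freeness for $A_l$ yields $\sum_{i\in S} w^{(l)}_i \leq t_l$. The symmetric ``mirror image'' computation for $B_l$, whose weights carry the opposite sign, yields the reverse inequality $\sum_{i\in S} w^{(l)}_i \geq t_l$. Sandwiching across all coordinates $l \in [d]$ gives $\sum_{i\in S} w^{(i)} = \bar{t}$, so $S$ solves $\mathcal{I}_{\mathrm{vs}}$.

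For the reverse direction, given a solution $S$ with $|S|=k$ and $\sum_{i\in S} w^{(i)} = \bar{t}$, I would construct the natural candidate allocation $\pi_{A_l} = \{a_l\}$, $\pi_{B_l} = \{b_l\}$, $\pi_C = \{c\} \cup \{v_i : i \in S\}$. Validity is immediate: the bundles are pairwise disjoint, their sizes sum to $2d + k + 1 = p$, and each is connected ($\pi_C$ because it contains the star center, the rest trivially). Envy-freeness reduces to a few small case checks: agents $A_l$ and $B_l$ value everything outside a handful of designated vertices at zero, so the only nontrivial comparison is with $\pi_C$, and by the mirrored calculations above, the equality $\sum_{i\in S} w^{(l)}_i = t_l$ turns both comparisons into equalities. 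Agent $C$ values every vertex at $1$, so the size-$(k+1)$ bundle dominates every singleton.

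The main obstacle, which also motivates the reduction's design, is satisfying ``upper'' and ``lower'' constraints on $\sum_{i\in S} w^{(l)}_i$ simultaneously for every coordinate $l$; this is why the construction uses two mirrored agent families $A_l, B_l$ with opposite-signed weight encodings, and pads every $v_i$ and the center with the common large term $n^3$ that dominates the $k\,n^{1+o(1)}$-bounded weights and thereby pins down the bundle sizes via the envy-freeness argument in Lemma~\ref{lem:heavy_vertices}. Once the structural skeleton is in place, the remainder is routine bookkeeping.
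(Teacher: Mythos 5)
Your proposal is correct and follows essentially the same route as the paper: both directions rest on the structural facts (Lemma~\ref{lem:heavy_vertices} plus the cardinality observations) forcing $\pi_{A_l}=\{a_l\}$, $\pi_{B_l}=\{b_l\}$, and $\pi_C=\{c\}\cup\{v_i : i\in S\}$, then sandwich $\sum_{i\in S} w^{(l)}_i$ between the two mirrored envy-freeness inequalities for the forward direction and verify the natural candidate allocation directly for the reverse. The computations you sketch match the paper's, so nothing further is needed.
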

\begin{proof}
    Let $\Pi$ be a valid and envy-free allocation in $(G,p,A,\mathcal{U})$. According to lemma \ref{lem:heavy_vertices}, in $\Pi$, agent $A_l$ and $B_l$ must be allocated $a_l$ and $b_l$ respectively, for every $l\in[d]$. Furthermore, we also have that agent $C$ must be allocated exactly $k+1$ vertices. Let these vertices be $\{c,v_{i_1},v_{i_2},\dots,v_{i_k}\}$. We claim that the set $S = \{i_1, i_2, \dots, i_k\} \subseteq [n]$ satisfies $\sum_{i \in S} w^{(i)} = \bar{t}$ in the instance $(\mathcal{W}, \bar{t}, k)$. Observe that, as $\Pi$ is envy-free, $u_{A_l}(\pi_C)$ is at most $u_{A_l}(\pi_{A_l})$ and $u_{B_l}(\pi_C)$ is at most $u_{B_l}(\pi_{B_l})$. This implies that,
    $$n^3+\sum_{j=1}^{k}(n^3+w_l^{(i_j)}) \quad \leq \quad (k+1)n^3+t_l \quad \implies \quad \sum_{j=1}^{k}w_l^{(i_j)}\quad \leq \quad t_l$$
    $$n^3+\sum_{j=1}^{k}(n^3-w_l^{(i_j)}) \quad \leq \quad (k+1)n^3-t_l \quad \implies \quad \sum_{j=1}^{k}w_l^{(i_j)}\quad \geq \quad t_l$$
    Hence if $(G,p,A,\mathcal{U})$ has a valid and envy-free allocation, then there exists a set $S = \{i_1, i_2, \dots, i_k\} \subseteq [n]$ that satisfies $\sum_{i \in S} w^{(i)} = \bar{t}$ in the instance $(\mathcal{W}, \bar{t}, k)$.

    Let $S = \{i_1, i_2, \dots, i_k\} \subseteq [n]$ be a set of cardinality $k$, such that, $\sum_{i \in S} w^{(i)} = \bar{t}$ in the instance $(\mathcal{W}, \bar{t}, k)$. Consider an allocation $\Pi$ which allocates $\{c,v_{i_1},v_{i_2},\dots,v_{i_k}\}$ to $C$, $a_l$ to $A_l$, and $b_l$ to $B_l$ for every $l\in[d]$. We claim that $\Pi$ is a valid and envy-free allocation in $(G,p,A,\mathcal{U})$. Observe that, for every agent $X \neq C$, we have, $u_C(\pi_X)\, =\, 1\, <\, (k+1)\, =\, u_C(\pi_C)$. Furthermore, for every $l \in [d]$ and agent $X\neq A_l$, we have $u_{A_l}(\pi_X)\, \leq\, (k+1)n^3 + \sum_{j=1}^{k} w_l^{(i_j)}\, =\, (k+1)n^3 + t_l\, =\, u_{A_l}(\pi_{A_l})$. Finally, for every $l \in [d]$ and agent $X\neq B_l$, we have $u_{B_l}(\pi_X)\, \leq\, (k+1)n^3 - \sum_{j=1}^{k} w_l^{(i_j)}\, =\, (k+1)n^3 - t_l\, =\, u_{B_l}(\pi_{B_l})$.
    Consequently, $\Pi$ is envy-free for the instance $(G, p, A, \mathcal{U})$. Since $\Pi$ allocates a total of $p$ vertices as disjoint, connected subsets to the agents by definition, its validity follows. This completes the proof of the lemma.
\end{proof}

By combining Proposition~\ref{prop:MSS_w1_hardness} and Lemma~\ref{lem:MSS_iff_ICFD} with the properties of the reduction, we obtain Theorem~\ref{thm:w1_hardness}

\Hardness*
\begin{proof}
    Given an instance $\mathcal{I}_{\mathrm{vs}} = (\mathcal{W}, \bar{t}, k)$ of $(k,n^{1+o(1)})$-\textsc{Vector-Sum} with $d$ dimensions, let $\mathcal{I}_{\scriptscriptstyle \mathrm{FD}} = (G,A,\mathcal{U},p)$ be an instance of \icfd\ constructed according to Subsection~\ref{subsec:reduction}. Observe that in our construction $G$ is a star graph, $p=2d+k+1$ and $|A| = 2d+1$. Furthermore, the range of all valuation functions are bounded above by $\mathcal{O}(n^3)$ which implies that the instance is in unary and that $|\mathcal{I}_{\scriptscriptstyle \mathrm{FD}}| = n^{\mathcal{O}(1)}$. Finally, given $\mathcal{I}_{\mathrm{vs}}$, we can construct $\mathcal{I}_{\scriptscriptstyle \mathrm{FD}}$ in $n^{\mathcal{O}(1)}$ time. Therefore, the statement of the theorem follows from Proposition~\ref{prop:MSS_w1_hardness} and Lemma~\ref{lem:MSS_iff_ICFD}.
\end{proof}

We conclude by remarking that this hardness proof extends to the mandatory setting without modification, as the properties of the constructed instance ensure that, in any envy-free allocation, every agent is assigned a non-empty set. Consequently, both directions of the proof remain valid even in this context.

\section{Approximation Scheme for \icfd\ Parameterized by p and $|\mathcal{A}|$}\label{sec:algo}

We begin this section by stating and proving the reduction rule that helps us upper bound the number of agents in an instance of \icfd\ by a function of $p$ and $|\mathcal{A}|$.

\begin{lemma}\label{lem:reduction_rule}
    There exists an algorithm that takes as input an instance $\mathcal{I} = (G, A, \mathcal{U}, p)$ of \icfd, runs in $(mn)^{\mathcal{O}(1)}$ time, and outputs an instance $\mathcal{I}' = (G', A', \mathcal{U}', p')$ of \icfd\ such that $\mathcal{I}'$ admits a valid and envy-free allocation if and only if $\mathcal{I}$ admits one, and furthermore, the number of agents of type $\boldsymbol{a}$ in $\mathcal{I}'$ is at most $p' + 1$, for every $\boldsymbol{a}\in \mathcal{A}_{\mathcal{I}'}$. 
\end{lemma}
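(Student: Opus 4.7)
The idea is that agents sharing a type are fully interchangeable, and in any valid allocation at most $p$ agents can be assigned non-empty bundles. First I would compute the partition of $A$ into types in $(mn)^{\mathcal{O}(1)}$ time by comparing $u_i$ and $u_j$ on every vertex for each pair $i,j$. Then, for each type $\boldsymbol{a} \in \mathcal{A}_{\mathcal{I}}$ with $|\boldsymbol{a}| > p + 1$, I would arbitrarily delete $|\boldsymbol{a}| - (p+1)$ agents from $\boldsymbol{a}$. The reduced \icfd\ instance is $\mathcal{I}' = (G, A', \mathcal{U}|_{A'}, p)$, where $A'$ is the surviving set of agents; in particular $G' = G$ and $p' = p$, and every type in $\mathcal{I}'$ contains at most $p' + 1$ agents by construction. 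The running time is clearly $(mn)^{\mathcal{O}(1)}$.

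\textbf{Forward direction.} Let $\Pi$ be a valid envy-free allocation of $\mathcal{I}$. Since exactly $p$ vertices are allocated in total, at most $p$ bundles are non-empty. For every type $\boldsymbol{a}$ with $|\boldsymbol{a}| > p+1$, at least $|\boldsymbol{a}| - p \geq 2$ of its agents therefore receive empty bundles under $\Pi$. Agents of the same type share a valuation function, so envy-freeness is invariant under permuting the bundles assigned to same-type agents. Hence I may permute $\Pi$ within each type so that all deleted agents end up with empty bundles, and then restrict to $A'$. The restriction still allocates $p$ vertices, keeps each bundle connected, and remains envy-free, yielding a valid envy-free allocation of $\mathcal{I}'$.

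\textbf{Backward direction and main hurdle.} Given a valid envy-free allocation $\Pi'$ of $\mathcal{I}'$, I would extend it by assigning the empty bundle to each deleted agent; validity is immediate. The crucial step is envy-freeness. For each type $\boldsymbol{a}$ from which agents were removed, $\mathcal{I}'$ retains $p+1$ agents of type $\boldsymbol{a}$; since only $p$ bundles can be non-empty, by pigeonhole at least one such agent $j$ receives the empty bundle in $\Pi'$. Envy-freeness of $\Pi'$ at $j$ yields $u_{\boldsymbol{a}}(\pi'_i) = u_j(\pi'_i) \leq u_j(\pi'_j) = 0$ for every $i \in A'$, so every bundle in $\Pi'$ has zero value under the type-$\boldsymbol{a}$ valuation. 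Consequently, any newly added empty-bundled agent of type $\boldsymbol{a}$ also values every other bundle at zero and is not envious. The only subtle point is this pigeonhole argument: the threshold $p+1$ is calibrated precisely so that one empty-bundled witness per affected type is guaranteed to survive in $\mathcal{I}'$, which is exactly what allows the extension back to $\mathcal{I}$ to preserve envy-freeness.
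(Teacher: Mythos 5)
Your proposal is correct and follows essentially the same route as the paper: keep at most $p+1$ agents per type, map allocations back and forth by exploiting that same-type agents are interchangeable and that at most $p$ bundles are non-empty. Your explicit pigeonhole step — that among the $p+1$ retained agents of an affected type some agent $j$ must hold an empty bundle, forcing $u_j(\pi'_k)=0$ for all $k$ — is in fact the detail the paper's own argument leaves implicit when it asserts that deleted agents are not envious, so your write-up is, if anything, slightly more careful on that point.
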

\begin{proof}
    We define $G' := G$, $p' := p$, and $\mathcal{U}' := \mathcal{U}$. To define $A'$, consider each agent type $\boldsymbol{a} \in \mathcal{A}_{\mathcal{I}}$. If there are at most $p + 1$ agents of type $\boldsymbol{a}$ in $A$, we include all of them in $A'$. Otherwise, we include an arbitrary selection of $p + 1$ agents of type $\boldsymbol{a}$ in $A'$. It is clear from definition of $\mathcal{I}'$ that the number of agents of type $\boldsymbol{a}$ in $\mathcal{I}'$ is at most $p' + 1$, for every $\boldsymbol{a}\in \mathcal{A}_{\mathcal{I}'}$. Hence to prove the lemma, it suffice to prove that $\mathcal{I}'$ admits a valid and envy-free allocation if and only if $\mathcal{I}$ admits one.

    To this end, let $\Pi'$ be a valid and envy-free allocation in $\mathcal{I}'$. We define an allocation $\Pi$ in $\mathcal{I}$, where for agent $i\in A$, we have $\pi_i := \pi'_i$ if $i\in A'\cap A$ and $\pi_i := \emptyset$ otherwise. Since the validity of $\Pi$ follows directly from that of $\Pi'$ we are left with proving envy-freeness. Note that no agent is envious of agents in $A\setminus A'$, since they are allocated the empty bundle. Furthermore, agents in $A\cap A'$ are not envious of each other since $\Pi'$ is an envy-free allocation. Finally, agents of $A\setminus A'$ are not envious of agents in $A\cap A'$, since corresponding to an agent $i\in A\setminus A'$, there exists an agent $j\in A\cap A'$ who has the same type and is not envious of anyone in $A\cap A'$. Therefore, we conclude that $\Pi$ is a valid and envy-free allocation in $\mathcal{I}$.

    Now, let $\Pi$ be a valid and envy-free allocation in $\mathcal{I}$. Let $\sigma: A' \rightarrow A$ be an injective function with the property that $\sigma(i)$ and $i$ are of the same type, and every agent in $A$ who is allocated a non-empty bundle under $\Pi$ lies in the image of $\sigma$. 
    Such a function exists because at most $p$ agents receive non-empty bundles under $\Pi$, and $A'$ includes all agents from types with at most $p+1$ agents, as well as at least $p+1$ agents from types that has more than $p+1$ agents in $A$.
    We define an allocation $\Pi'$ in $\mathcal{I}'$ by setting $\pi'_i := \pi_{\sigma(i)}$ for each $i \in A'$. Since $\Pi$ is a valid and envy-free allocation in $\mathcal{I}$, and $\Pi'$ is obtained by relabeling agents via a injection that preserves types, it follows that $\Pi'$ is also valid and envy-free in $\mathcal{I}'$. This concludes the proof of the lemma.
\end{proof}

\begin{observation}\label{obs:bound_on_agents}
    If $\mathcal{I} = (G, A, \mathcal{U}, p)$ is an instance of \icfd\ obtained after applying the algorithm of Lemma~\ref{lem:reduction_rule}, then $|A| \leq (p+1) \cdot |\mathcal{A}|$
\end{observation}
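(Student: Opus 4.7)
The plan is to derive this bound as an immediate counting consequence of the guarantee provided by Lemma~\ref{lem:reduction_rule}. First, I would recall that the agent types $\mathcal{A}_{\mathcal{I}}$ partition the set $A$ into equivalence classes, where two agents belong to the same part if and only if they have identical valuation functions over $V(G)$. Hence $|A| = \sum_{\boldsymbol{a} \in \mathcal{A}_{\mathcal{I}}} |\boldsymbol{a}|$, where $|\boldsymbol{a}|$ denotes the number of agents in $A$ of type $\boldsymbol{a}$.

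Next, I would invoke the conclusion of Lemma~\ref{lem:reduction_rule} directly: since $\mathcal{I}$ is the output of the algorithm described therein, we have $|\boldsymbol{a}| \leq p + 1$ for every $\boldsymbol{a} \in \mathcal{A}_{\mathcal{I}}$. Substituting this bound into the sum immediately yields $|A| \leq (p+1) \cdot |\mathcal{A}_{\mathcal{I}}|$, which matches the stated inequality once we apply the paper's convention of dropping the subscript $\mathcal{I}$.

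The main (minor) obstacle here is essentially notational rather than mathematical: one must simply make explicit that the parameter $p'$ in Lemma~\ref{lem:reduction_rule} coincides with the $p$ in the present statement (since the algorithm sets $p' := p$), and that $\mathcal{A}$ here refers to $\mathcal{A}_{\mathcal{I}}$ after dropping the subscript. Since no further combinatorial argument is needed, I expect the proof to be essentially a two-line derivation from the lemma, and it could even be presented as a short remark inline rather than a full proof block.
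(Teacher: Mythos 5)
Your proposal is correct and matches the paper's own (one-sentence) justification: the paper likewise notes that the types partition $A$ and that Lemma~\ref{lem:reduction_rule} caps each type at $p+1$ agents, so summing over types gives the bound. The notational point about $p' = p$ and dropping the subscript $\mathcal{I}$ is fine and consistent with the paper's conventions.
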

\begin{observation}\label{obs:ef_for_non_reduced_instance}
    If $\mathcal{I}'$ is the instance of \icfd\ obtained after applying the algorithm of Lemma~\ref{lem:reduction_rule} to $\mathcal{I}$, then given a valid and envy-free allocation $\Pi'$ for $\mathcal{I}'$, we can construct one for $\mathcal{I}$ in $(mn)^{\mathcal{O}(1)}$ time.
\end{observation}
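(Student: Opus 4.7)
The plan is to reuse verbatim the construction already exhibited in the forward direction of the proof of Lemma~\ref{lem:reduction_rule}. Given $\Pi' = (\pi'_i)_{i\in A'}$, I would define the allocation $\Pi = (\pi_i)_{i \in A}$ for $\mathcal{I}$ by setting $\pi_i := \pi'_i$ whenever $i \in A \cap A'$ and $\pi_i := \emptyset$ otherwise. All that remains is to argue that (i) this recipe is effective, i.e.\ runs in polynomial time, and (ii) its correctness is already subsumed by Lemma~\ref{lem:reduction_rule}.

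For the time-complexity step, I would iterate over each agent $i \in A$ and test membership in $A'$ (for example by precomputing a lookup table of $A'$ in $\mathcal{O}(n)$ time). Copying a bundle $\pi'_i \subseteq V(G)$ takes $\mathcal{O}(m)$ time and there are at most $n$ agents, giving a total running time of $\mathcal{O}(nm)$, which is comfortably $(mn)^{\mathcal{O}(1)}$.

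For the correctness step, I would appeal directly to the argument already written in Lemma~\ref{lem:reduction_rule}. Validity of $\Pi$ is immediate, since $\Pi$ uses exactly the same connected, pairwise-disjoint bundles as $\Pi'$ and allocates the same $p$ vertices. For envy-freeness, the only subtle case is that an agent $i \in A \setminus A'$ might envy some $j \in A \cap A'$; this is ruled out because, as observed in Lemma~\ref{lem:reduction_rule}, $A'$ contains $p+1$ agents of every type with more than $p+1$ representatives in $A$, so at least one agent $j'$ of the same type as $i$ receives the empty bundle under $\Pi'$, and envy-freeness of $\Pi'$ then forces all bundles to have value zero according to $j'$ (and hence according to $i$, since they share a type).

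There is essentially no genuine obstacle here: both the constructive and the correctness components are already contained, at least implicitly, in the proof of Lemma~\ref{lem:reduction_rule}. The only care to take is to make the implicit bookkeeping (the lookup table for $A'$, the copying of each bundle) explicit enough to certify the $(mn)^{\mathcal{O}(1)}$ bound, after which the observation follows immediately.
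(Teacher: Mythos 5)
Your proposal is correct and follows the same route as the paper: the observation is justified there by pointing back to the forward direction of the proof of Lemma~\ref{lem:reduction_rule}, which uses exactly your construction ($\pi_i := \pi'_i$ for $i \in A \cap A'$ and $\pi_i := \emptyset$ otherwise). Your treatment of the one nontrivial case---an agent in $A \setminus A'$ enviing an agent in $A \cap A'$---is in fact spelled out more carefully than the paper's one-line version, correctly pinning down that some same-type agent in $A'$ must receive an empty bundle (since only $p$ vertices are allocated among $p+1$ agents of that type) and hence values every bundle at zero.
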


Observation~\ref{obs:bound_on_agents} follows from the fact that each agent is associated with a type in $\mathcal{A}$, and the lemma guarantees that no type in the output instance has more than $p + 1$ agents. Observation~\ref{obs:ef_for_non_reduced_instance} follows from the construction described in the proof of Lemma~\ref{lem:reduction_rule}, which shows how to obtain a valid and envy-free allocation for the original instance from one for the output instance.\\[-6pt]

Now, we state a lemma that serves as a subroutine in the algorithm. Since it is not central to the main discussion, we defer its proof to the end of the section.

\begin{restatable}{lemma}{GraphMotifAlgo}\label{lem:graph_motif_algo}
There exists an algorithm that, given a graph $G$, a weight function $w: V(G) \rightarrow \mathbb{N}_{\geq 0}$, and a partition $\mathcal{C} = \{C_1, \dots, C_k\}$ of $V(G)$ into $k$ parts, runs in time $3^k \cdot |V(G)|^{\mathcal{O}(1)}$ and outputs a connected subgraph $H$ of $G$ that maximizes $\sum_{v \in V(H)} w(v)$ subject to the constraint that $|V(H) \cap C_i| = 1$ for every $i \in [k]$, if such a subgraph exists, and correctly reports that no such subgraph exists otherwise.
\end{restatable}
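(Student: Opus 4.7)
The plan is to design a Dreyfus--Wagner style dynamic program, recognizing the lemma as a weighted \emph{Graph Motif} problem with a colorful motif in which the parts $C_1,\dots,C_k$ play the role of colors. Writing $c(v)$ for the unique index $i$ with $v\in C_i$, for every $v\in V(G)$ and every $S\subseteq[k]$ with $c(v)\in S$ I would set $T[v,S]$ equal to the maximum of $\sum_{u\in V(H')}w(u)$ taken over connected subgraphs $H'$ of $G$ that contain $v$ and satisfy $|V(H')\cap C_i|=1$ for all $i\in S$ and $V(H')\cap C_j=\emptyset$ for all $j\in[k]\setminus S$; set $T[v,S]=-\infty$ if no such $H'$ exists. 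The desired output then corresponds to $\max_{v\in V(G)}T[v,[k]]$, so the algorithm returns the subgraph realizing this maximum, or reports infeasibility if every such entry is $-\infty$.

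The DP has base case $T[v,\{c(v)\}]=w(v)$ and two recurrences, both processed in order of increasing $|S|$: a \emph{merge at $v$} asserts $T[v,S]\ge T[v,S_1]+T[v,S_2]-w(v)$ for every pair $S_1,S_2\subsetneq S$ with $S_1\cap S_2=\{c(v)\}$ and $S_1\cup S_2=S$, and an \emph{extend along an edge} asserts $T[v,S]\ge T[u,S\setminus\{c(v)\}]+w(v)$ for every $u\in N_G(v)$ with $c(u)\in S\setminus\{c(v)\}$. Correctness is shown by induction on $|S|$: for an optimal feasible subgraph $H$ witnessing entry $T[v,S]$, fix an arbitrary spanning tree of $H$; if $v$ is a leaf of that tree the extend rule reconstructs the value from $H-v$, and if $v$ has degree at least two the merge rule partitions the spanning tree at $v$ into two smaller subtrees sharing only $v$. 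Conversely, the subgraphs combined by the merge rule share exactly the vertex $v$, because vertices in distinct parts of $\mathcal{C}$ are distinct, so their union is a valid connected subgraph of the correct color profile and weight $T[v,S_1]+T[v,S_2]-w(v)$.

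For the running time, the merge step dominates: with $v$ fixed, the number of pairs $(S,S_1)$ with $c(v)\in S_1\subseteq S\subseteq[k]$ equals $\sum_{j=1}^{k}\binom{k-1}{j-1}2^{j-1}=3^{k-1}$, giving $\mathcal{O}(n\cdot 3^k)$ total merge updates; the extend step contributes an additional $\mathcal{O}(m\cdot 2^k)$ updates, absorbed into the same bound. To output $H$ itself I would maintain, alongside each DP entry, a back-pointer recording which transition achieved its value and then backtrack from the winning entry $T[v^\ast,[k]]$ to recover $V(H)$ together with a set of edges of $G$ witnessing connectivity. The main subtlety I anticipate is the correctness argument: one has to verify that every optimal spanning tree admits a decomposition matching at least one of the two rules (requiring a careful case distinction on the degree of the distinguished vertex $v$) and that the merge rule never glues two subtrees into an object with a repeated vertex---which is precisely ruled out because $\mathcal{C}$ is a partition and the two color sets $S_1,S_2$ intersect only in $c(v)$.
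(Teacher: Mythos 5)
Your proposal is correct and follows essentially the same approach as the paper: a dynamic program over pairs $(v,S)$ of a root vertex and a subset of color classes, with correctness by induction on $|S|$ and a $3^k\cdot|V(G)|^{\mathcal{O}(1)}$ running time from the subset-splitting step. The only cosmetic difference is the recurrence --- you use the Dreyfus--Wagner two-rule decomposition (extend along an edge, or merge two subtrees sharing only $v$), whereas the paper uses a single rule splitting $S$ into disjoint color sets assigned to $v$ and to a neighbor $u\in N[v]$; both are standard variants yielding the same bound.
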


Now, let $\mathcal{I} = (G, A, \mathcal{U}, p)$ be an instance of \icfd\ and $\varepsilon$ be a positive real number. We make the following definitions corresponding to $\mathcal{I}$ and $\epsilon$. Let $\varepsilon_{\varepsilon}' > 0$ denote the positive root of the equation $x^2+ 3x = \varepsilon$, and define $q_{\varepsilon} := 1 + \varepsilon'_{\varepsilon}$ and $t_{\varepsilon} := 1 + \left\lceil \log_{q_{\varepsilon}}\left(\frac{p}{\varepsilon'_{\varepsilon}}\right) \right\rceil$. Finally, for positive real number $\rho$ and $\alpha\in\mathbb{N}_{\geq 0}$, let $\overline{\log}_{\rho}\alpha$ denote $\left\lceil \log \alpha \right\rceil$, if $\alpha>0$ and $-1$ otherwise.

\begin{definition}
    We define a \emph{bucket profile} to be a tuple $(\mathcal{C}, \mathcal{S}, \Lambda)_{\varepsilon,\mathcal{I}}$, where $\mathcal{C} = \{C_c\}_{c \in [p]}$ is a partition of $V(G)$, $\mathcal{S} = \{S_i\}_{i \in A}$ is a partition of $[p]$, and $\Lambda : A \times [p] \rightarrow [0, t_{\varepsilon}]$ is a function such that, for every pair of distinct agents $i, j \in A$, we have $\sum_{\substack{c \in S_j \\ \Lambda(i, c) \neq t_{\varepsilon}}} q_{\varepsilon}^{-\Lambda(i, c)} \leq q_{\varepsilon}$.
\end{definition}


Referring back to the high-level sketch of the algorithm in Section~\ref{sec:intro}, we note that the set $\mathcal{C}$ corresponds to the partitioning obtained through color coding, $\mathcal{S}$ captures the guess that maps colors to agents, and $\Lambda$ encodes the guessed features of the allocation. In particular, $\Lambda(i,c)$ represents the guess on how far apart (in terms of powers of $q$) the valuation of a vertex in color class $c$ and the valuation of the bundle allocated to agent $i$ is, with respect to $u_i$, in the valid and envy-free allocation we are searching for. The following definition formalizes what it means for an allocation to be consistent with the guesses.

\begin{definition}
    An allocation $\Pi$ in $\mathcal{I}$ \emph{respects} a bucket profile $(\mathcal{C}, \mathcal{S}, \Lambda)_{\varepsilon,\mathcal{I}}$ if it satisfies the following:
    \begin{itemize}
        \item For every agent $i \in A$ and $c \in [p]$, we have $|\pi_i \cap C_c| = 1$ if $c \in S_i$, and $|\pi_i \cap C_c| = 0$ otherwise;
        \item For every agent $i \in A$, $c \in [p]$, and $v \in \pi \cap C_c$, we have 
        \begin{itemize}
            \item $\overline{\log}_{q_{\varepsilon}}(u_i(\pi_i)) - \overline{\log}_{q_{\varepsilon}}(u_i(v)) \geq \Lambda(i, c)$ or $u_i(v) = 0$ if $\Lambda(i, c) = t_{\varepsilon}$, and,
            \item $ \overline{\log}_{q_{\varepsilon}}(u_i(\pi_i)) - \overline{\log}_{q_{\varepsilon}}(u_i(v)) = \Lambda(i, c)$ and $u_i(v) > 0$ otherwise.
        \end{itemize}
    \end{itemize}
\end{definition}

Finally, we formally define the target vector and what it means for an allocation to be consistent with it.

\begin{definition}
    Let $M := \max\{u_i(v)\; |\; v\in V(G),\, i\in A\}$. We define a \emph{target vector} $\bar{\mu}_{\varepsilon,\mathcal{I}} \in \left[-1, \lceil \log_{q_{\varepsilon}}(p M) \rceil \right]^A$ and use $\mu_i$ to denote the entry of $\bar{\mu}_{\varepsilon,\mathcal{I}}$ associated with agent $i \in A$. An allocation $\Pi$ \emph{attains} the target $\bar{\mu}_{\varepsilon,\mathcal{I}}$ for agent $i\in A$, if $\overline{\log}_{q_{\varepsilon}} (u_i(\pi_i)) = \mu_i$.
\end{definition}

For the rest of the discussion, for ease of notation, we drop the subscript $\mathcal{I}$ and $\varepsilon$, whenever it is clear from context. \\[-6pt]

The following lemma is the main subroutine of the algorithm.

\begin{lemma}\label{lem:subroutine_house_allocation}
    There exists an algorithm that, given an instance $(G, A, \mathcal{U}, p)$ of \icfd, a positive real number $\varepsilon$, a bucket profile $(\mathcal{C}, \mathcal{S}, \Lambda)$, and a target vector $\bar{\mu}$, runs in time $3^p \cdot m^{\mathcal{O}(1)}$ and returns one of the following:
    \begin{itemize}
        \item A valid and $\varepsilon$-envy-free allocation $\Pi$, or
        \item An agent $i \in A$ such that no valid and envy-free allocation that respects the bucket profile $(\mathcal{C}, \mathcal{S}, \Lambda)$ attains a target vector $\bar{\mu}'$, where $\bar{\mu}' \leq \bar{\mu}$ with $\mu_i' = \mu_i$.
    \end{itemize}
\end{lemma}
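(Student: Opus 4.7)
The plan is to read the target $\bar{\mu}$ and the bucket profile $(\mathcal{C},\mathcal{S},\Lambda)$ together as a pointwise specification of which vertices can appear in any allocation that respects the profile and attains some $\bar{\mu}^* \leq \bar{\mu}$. Concretely, for each color $c \in [p]$ I would declare $v \in C_c$ to be \emph{eligible}, written $v \in V_c$, iff for every agent $i \in A$: (i) $0 < u_i(v) \leq q_{\varepsilon}^{\mu_i - \Lambda(i,c)}$ when $\Lambda(i,c) < t_{\varepsilon}$, and (ii) $u_i(v) \leq q_{\varepsilon}^{\mu_i - t_{\varepsilon}}$ when $\Lambda(i,c) = t_{\varepsilon}$. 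These are exactly the bounds that the respect condition forces on any allocated vertex under the slack $\overline{\log}_{q_{\varepsilon}} u_i(\pi_i^*) \leq \mu_i$, and for an agent $i$ with $\mu_i = -1$ they additionally force $u_i(v) = 0$, which neatly handles that corner case.

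Since $\{S_j\}_{j \in A}$ partitions $[p]$, bundles of distinct agents automatically live in disjoint color classes, and the remaining constraints on agent $j$'s bundle are just connectivity and the use of exactly one vertex from each $V_c$ with $c \in S_j$. I would handle this, independently for each $j \in A$, by invoking Lemma~\ref{lem:graph_motif_algo} on $G\bigl[\bigcup_{c \in S_j} V_c\bigr]$ with partition $\{V_c\}_{c \in S_j}$ and weight function $u_j$, producing in $3^{|S_j|} m^{\mathcal{O}(1)}$ time a maximum-$u_j$-weight connected subgraph $H_j$ using one vertex per class, or a report of infeasibility. I would then check whether $\overline{\log}_{q_{\varepsilon}} u_j(H_j) \geq \mu_j$ holds for every $j$: if so, return $\Pi = (H_j)_{j \in A}$; otherwise, return any failing agent.

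To justify the first branch, I would argue validity of $\Pi$ directly and prove $\varepsilon$-envy-freeness as follows. Fix $i,j$: every $v \in \pi_j \cap C_c$ lies in $V_c$, so summing the pointwise bounds, applying the bucket profile inequality $\sum_{c \in S_j,\, \Lambda(i,c) \neq t_{\varepsilon}} q_{\varepsilon}^{-\Lambda(i,c)} \leq q_{\varepsilon}$ to the non-tail contributions and using the definition of $t_{\varepsilon}$ to bound the at most $p$ tail contributions by $\varepsilon'_{\varepsilon} q_{\varepsilon}^{\mu_i - 1}$, one gets $u_i(\pi_j) \leq q_{\varepsilon}^{\mu_i - 1}(q_{\varepsilon}^2 + \varepsilon'_{\varepsilon})$. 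Combining this with $u_i(\pi_i) > q_{\varepsilon}^{\mu_i - 1}$ (from the passing test) and the identity $q_{\varepsilon}^2 + \varepsilon'_{\varepsilon} = 1+\varepsilon$, which follows from $(\varepsilon'_{\varepsilon})^2 + 3\varepsilon'_{\varepsilon} = \varepsilon$, yields $u_i(\pi_j) < (1+\varepsilon)\,u_i(\pi_i)$.

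For the second branch, I would suppose for contradiction that agent $j$ fails the test yet some valid envy-free $\Pi^*$ respects $(\mathcal{C},\mathcal{S},\Lambda)$ and attains some $\bar{\mu}^* \leq \bar{\mu}$ with $\mu_j^* = \mu_j$. The respect condition combined with $\mu_i^* \leq \mu_i$ for every $i$ forces every vertex of $\pi_j^*$ into the corresponding $V_c$, making $\pi_j^*$ a feasible solution to the Graph Motif instance for $j$; maximality of $H_j$ then gives $\overline{\log}_{q_{\varepsilon}} u_j(H_j) \geq \overline{\log}_{q_{\varepsilon}} u_j(\pi_j^*) = \mu_j$, contradicting the failure. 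The step I expect to be most delicate is the calibration of $V_c$: it must be loose enough that every vertex of every witness allocation with $\bar{\mu}^* \leq \bar{\mu}$ is eligible (so that Graph Motif infeasibility genuinely certifies the absence of a witness), yet tight enough that the pointwise bounds telescope down to exactly the $(1+\varepsilon)$-envy-free inequality; the definition above is tuned precisely to make both ends meet. The total running time is dominated by at most $|A|$ invocations of Lemma~\ref{lem:graph_motif_algo} on instances with at most $p$ color classes, which gives $3^p (mn)^{\mathcal{O}(1)}$ overall.
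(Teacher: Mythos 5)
Your proposal matches the paper's proof essentially step for step: the same eligibility filter derived from the bucket profile and the target vector, the same independent per-agent reduction to the weighted Graph Motif subroutine of Lemma~\ref{lem:graph_motif_algo}, the same threshold test $\overline{\log}_{q_{\varepsilon}} u_j(V(H_j)) \geq \mu_j$, and the same split of $u_i(\pi_j)$ into tail and non-tail contributions culminating in the identity $q_{\varepsilon}^2 + \varepsilon'_{\varepsilon} = 1+\varepsilon$. The only cosmetic difference is that your eligibility condition also constrains the owner agent of each color class (the paper constrains only agents $i$ with $c \notin S_i$), which is harmless because the respect condition forces owner-eligibility on any witness allocation anyway.
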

\begin{proof}
    We define vertex $v \in V(G)$ to be \emph{eligible} if, letting $c \in [p]$ be such that $v \in C_c$, for every agent $i \in A$ with $c \notin S_i$, we have $\overline{\log}_q(u_i(v)) \leq \mu_i - \Lambda(i, c)$ or $u_i(v) = 0$. Let $G_i$ be the subgraph of $G$ induced by the eligible vertices in $\bigcup_{c \in S_i} C_c$. We define a subgraph $H_i$ of $G_i$ to be \emph{colorful}, if $|V(H_i)\cap C_c| = 1$ for every $c \in S_i$. Before describing the algorithm, we prove the following claims:

    \begin{claim}\label{claim:property_one_of_subroutine}
        If for every agent $i \in A$ there exists a connected, colorful subgraph $H_i$ of $G_i$ satisfying the inequality $\overline{\log}_q \left( u_i(V(H_i)) \right) \geq \mu_i$, then the allocation $\Pi$ defined by $\pi(i) := V(H_i)$ is a valid and $\varepsilon$-envy-free allocation.
    \end{claim}
    \begin{proof}
        We begin by observing that $\{\pi_i\}_{i\in A}$ are pairwise disjoint. Otherwise, if there exists a vertex $v \in \pi_i \cap \pi_j$ for some pair of distinct agents $i, j \in A$, then $v\in V(G_i)\cap V(G_j)$ which implies that either $v$ belongs to two distinct sets in $\mathcal{C}$, which contradicts the fact that $\mathcal{C}$ is a partition of $V(G)$, or the set $C_c$ containing $v$ belongs to both $S_i$ and $S_j$, which contradicts the fact that $\mathcal{S}$ is a partition of $[p]$. Moreover, we have $\sum_{i \in A} |\pi_i| = \sum_{i \in A} |S_i| = p$, since $\mathcal{S}$ is a partition of $[p]$. Finally, as each $G[\pi_i]$ is connected by definition, it follows that $\Pi$ is a valid allocation.
        
        Now, to prove $\varepsilon$-envy-freeness, consider $c \in [p]$, agent $i \in A$ such that $c\notin S_i$, and vertex $v \in \pi \cap C_c$. If $u_i(v)$ is positive, then $\overline{\log}_q \left( u_i(v) \right) = \left\lceil \log_q \left( u_i(v) \right) \right\rceil \geq 0$. Moreover, since $\Lambda(i, c)$ is non-negative and $v$ is eligible, it follows that $0\, \leq\, \mu_i\, \leq\, \overline{\log}_q \left( u_i(\pi_i) \right) = \left\lceil \log_q \left( u_i(\pi_i) \right) \right\rceil$. Therefore, whenever $u_i(v)$ is positive, we have the following inequality, :
        \begin{align}\label{ineq:one}
           \hspace{63pt} u_i(v) \cdot q^{\Lambda(i, c)} \quad
            \nonumber & \leq \quad u_i(v) \cdot q^{\mu_i - \overline{\log}_q(u_i(v))} \\
            \nonumber & \leq \quad u_i(v) \cdot q^{\log_q(u_i(\pi_i)) + 1 - \log_q(u_i(v))} \\
            & = \quad u_i(\pi_i) \cdot q 
        \end{align}
        Thus, if $i, j \in A$ are distinct agents, then,
        \begin{align*}
            u_i(\pi_j) \quad = \quad \sum_{\substack{c\in S_j\\ v\in \pi_j\cap C_c}} u_i(v) \quad
            & \leq \quad \sum_{\substack{c\in S_j\\ \Lambda(i,c)= t}} u_i(\pi_i)\cdot q^{1-\Lambda(i,c)} + \sum_{\substack{c\in S_j\\ \Lambda(i,c)\neq t}} u_i(\pi_i)\cdot q^{1-\Lambda(i,c)}\\
            & \leq \quad p \cdot u_i(\pi_i)\cdot q^{1-t}\ +\ q^2 u_i(\pi_i) \\
            & \leq \quad \varepsilon'u_i(\pi_i)\ +\ (1+\varepsilon')^2 u_i(\pi_i) \\
            & \leq \quad (1+\varepsilon)u_i(\pi_i)
        \end{align*}
        where the first transition follows from the fact that $\pi_j$ contains exactly one vertex from each $C_c$ for $c \in S_j$; the second follows from inequality~\eqref{ineq:one}; the third uses the observation that there are at most $p$ elements $c \in S_j$ with $\Lambda(i, c) = t$ and the assumption that $(\mathcal{C}, \mathcal{S}, \Lambda)$ is a bucket profile, which implies $\sum_{\substack{c \in S_j \\ \Lambda(i,c) \neq t}} q^{-\Lambda(i,c)} \leq q$; the fourth and fifth follow from the definitions of $q$, $t$, and $\varepsilon'$. Therefore, $\Pi$ is $\varepsilon$-envy-free, which completes the proof of the claim.
    \end{proof}

    \begin{claim}\label{claim:property_two_of_subroutine}
        If there exists an agent $i \in A$ such that no connected, colorful subgraph $H_i$ of $G_i$ satisfies $\overline{\log}_q \left( u_i(V(H_i)) \right) \geq \mu_i$, then there does not exist a valid and envy-free allocation that respects $(\mathcal{C}, \mathcal{S}, \Lambda)$ and attains a target vector $\bar{\mu}'$, where $\bar{\mu}'$ is a target vector that satisfies $\bar{\mu}' \leq \bar{\mu}$ and $\mu'_i = \mu_i$.
    \end{claim}
    \begin{proof}
        Let $i \in A$ be an agent such that no connected, colorful subgraph $H_i$ of $G_i$ satisfies $\overline{\log}_q \left( u_i(V(H_i)) \right) \geq \mu_i$. Let $\Bar{\mu}'$ be a target vector that satisfies $\Bar{\mu}'\leq \Bar{\mu}$ and $\mu'_i = \mu_i$. Suppose for contradiction that there exists a valid and envy-free allocation $\Pi$ that respects $(\mathcal{C}, \mathcal{S}, \Lambda)$, and attains $\Bar{\mu}'$. 

        Since $\Pi$ is valid, $G[\pi_i]$ is connected. Moreover, as $\Pi$ attains $\bar{\mu}'$, we have $\overline{\log}_q(u_i(\pi_i)) = \mu_i' = \mu_i$. Furthermore, observe that since $\Pi$ respects $(\mathcal{C}, \mathcal{S}, \Lambda)$, each vertex in $\pi_i$ belongs to a distinct set in $\mathcal{C}$ whose indices belong to $S_i$, which ensures that $G[\pi_i]$ is colorful. Finally, to show that $G[\pi_i]$ is a subgraph of $G_i$, note that for every $c \in S_i$, $v \in C_c$, and agent $j \in A$ that is distinct from $i$, if $u_j(v) > 0$ then we have $\overline{\log}_q(u_j(\pi_j)) - \overline{\log}_q(u_j(v)) \geq \Lambda(j, c)$. As $\Pi$ attains $\bar{\mu}'$, it follows that $\overline{\log}_q(u_j(\pi_j)) = \mu'_j \leq \mu_j$, which implies $\overline{\log}_q(u_j(v)) \leq \mu_j - \Lambda(j, c)$. Hence, every vertex in $\pi_i$ is eligible, and consequently, $G[\pi_i]$ is a subgraph of $G_i$. Hence, $G[\pi_i]$ is a connected, colorful subgraph of $G_i$ satisfying $\overline{\log}_q \left( u_i(V(H_i)) \right) \geq \mu_i$, which contradicts the definition of $i$. This completes the proof of the claim.
    \end{proof}

    \textit{Algorithm:} For each agent $i \in A$, the algorithm computes the graph $G_i$, defined to be the subgraph of $G$ induced by the eligible vertices in $\bigcup_{c \in S_i} C_c$. It calls the algorithm of Lemma~\ref{lem:graph_motif_algo} with input $(G_i, u_i, \{C_c\}_{c \in S_i})$ as a subroutine to compute a maximum-weight connected, colorful subgraph $H_i$ of $G_i$. If for every agent $i \in A$ a subgraph $H_i$ is found, for which $\overline{\log}_q \left( u_i(V(H_i)) \right) \geq \mu_i$, then the algorithm returns the allocation $\Pi$ defined by $\pi(i) := V(H_i)$. Otherwise, it returns an agent $i \in A$ for which no connected, colorful subgraph satisfying $\overline{\log}_q \left( u_i(V(H_i)) \right) \geq \mu_i$ could be found.\\[-8pt]
    
    The correctness of the algorithm follows from Claims~\ref{claim:property_one_of_subroutine} and~\ref{claim:property_two_of_subroutine}, along with the correctness of the subroutine. The running time bound follows from the fact that the subroutine takes $3^p m^{\mathcal{O}(1)}$ time, along with the observation that all other steps of the algorithm run in $m^{\mathcal{O}(1)}$ time. This completes the proof of Lemma~\ref{lem:subroutine_house_allocation}.
\end{proof}

With this we are ready to prove Theorem~\ref{thm:epas_param_by_p_and_|A|}.

\EPAS*
\begin{proof}
    If the instance $(G, A, \mathcal{U}, p)$ satisfies $|A| > (p+1)|\mathcal{A}|$, then we apply the algorithm from Lemma~\ref{lem:reduction_rule} to obtain a new instance $\mathcal{I}'$, for which we have $|A_{\mathcal{I}'}| \leq (p+1)|\mathcal{A}_{\mathcal{I}'}|$, by Observation~\ref{obs:bound_on_agents}. We then solve $\mathcal{I}'$ and, using Observation~\ref{obs:ef_for_non_reduced_instance} and Lemma~\ref{lem:reduction_rule}, reconstruct a solution for the original instance $(G, A, \mathcal{U}, p)$, incurring only a negligible overhead of $m^{\mathcal{O}(1)}$ time. Therefore, for the remainder of the proof, we may assume that we are working with an instance that satisfies $|A| \leq (p + 1)|\mathcal{A}|$.
    
    Now, let us fix an arbitrary labeling $l$ on $V(G)$ using the set $[m]$. We say that a partition $\mathcal{C}$ of $V(G)$ is induced by a function $f: [m] \rightarrow [p]$, if for every $v\in V(G)$, we have that $v\in C_c$ if and only if $f(l(v)) = c$, where $l(v)$ denotes the label of $v$. Let $\mathcal{B}$ denote the set of all bucket profiles that belong to the set $\mathcal{F}_{\mathcal{C}} \times \mathcal{F}_{\mathcal{S}} \times \mathcal{F}_{\Lambda}$, where $\mathcal{F}_{\mathcal{C}}$ is the set of partitions of $V(G)$ induced by elements of an $(m, p)$-perfect hash family of size $e^p p^{\mathcal{O}(\log p)}\log m$ whose existence is guaranteed by Proposition~\ref{prop:n_k_perfect_hash_family}, $\mathcal{F}_{\mathcal{S}}$ is the set of all partitions of the set $[p]$ into $n$ parts, where some parts may be empty, and $\mathcal{F}_{\Lambda}$ is the set of all functions from $A \times [p]$ to $[0,t]$. Let $M := \max\{u_i(v) \mid v \in V(G),\, i \in A\}$, and define a target vector $\bar{\mu}' = (\mu'_1, \dots, \mu'_n)$, where $\mu'_i := \lceil \log_q(p \cdot M) \rceil$ for each $i \in A$. \\[-10pt]
    
    \textit{Algorithm:} For each bucket profile in $\mathcal{B}$, the algorithm initializes the target vector $\bar{\mu}$ to $\bar{\mu}'$. It then invokes the algorithm from Lemma~\ref{lem:subroutine_house_allocation} as a subroutine. If it returns a valid and $\varepsilon$-envy-free allocation $\Pi$, the algorithm outputs $\Pi$ and terminates. Otherwise, upon receiving an agent $i \in A$, it updates $\bar{\mu}$ by reducing $\mu_i$ by one and iterates by invoking the algorithm from Lemma~\ref{lem:subroutine_house_allocation} again. This loop continues until some entry of $\bar{\mu}$ becomes equal to negative two, at which point the algorithm proceeds to the next bucket profile in $\mathcal{B}$. If no allocation is found after all elements of $\mathcal{B}$ have been considered, the algorithm concludes that the instance $(G, A, \mathcal{U}, p)$ admits no valid and envy-free allocation.\\[-10pt]

    To prove correctness of the algorithm, assume that $(G,A,\mathcal{U},p)$ is an instance of \icfd\ that has a valid and envy-free allocation $\Pi$. We only need to show that the algorithm returns an allocation, since by Lemma~\ref{lem:subroutine_house_allocation}, any allocation that is returned by the subroutine and consequently the algorithm will be valid and $\varepsilon$-envy-free.
    Toward this end, we make the following claim.
    
    \begin{claim}\label{claim:if_allocation_then_bucket_profile}
        If $(G,A,\mathcal{U},p)$ is an instance of \icfd\ that has a valid and envy-free allocation $\Pi$, then there exists a bucket profile $(\mathcal{C}, \mathcal{S}, \Lambda) \in \mathcal{B}$ that is respected by $\Pi$.
    \end{claim}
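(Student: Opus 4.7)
The plan is to build the bucket profile $(\mathcal{C},\mathcal{S},\Lambda)$ directly from the given allocation $\Pi$. Since $\Pi$ is valid, the allocated set $\pi := \bigcup_{i\in A}\pi_i$ has exactly $p$ vertices. First, I would use Proposition~\ref{prop:n_k_perfect_hash_family} to pick a function $f:[m]\to[p]$ in the $(m,p)$-perfect hash family that is injective on the labels of $\pi$, and let $\mathcal{C}\in\mathcal{F}_{\mathcal{C}}$ be the partition of $V(G)$ induced by $f$. Injectivity together with $|\pi|=p$ guarantees that each color class $C_c$ contains exactly one vertex of $\pi$, which I shall denote by $v_c$. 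I then define $S_i := \{c\in[p] : v_c\in\pi_i\}$; since the $\pi_i$'s are pairwise disjoint and their union is $\pi$, the collection $\mathcal{S}=\{S_i\}_{i\in A}$ is a partition of $[p]$ into $|A|=n$ (possibly empty) parts, and hence lies in $\mathcal{F}_{\mathcal{S}}$.

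For the function $\Lambda$, I set $\Lambda(i,c):=t$ whenever $u_i(v_c)=0$ or $\overline{\log}_q(u_i(\pi_i))-\overline{\log}_q(u_i(v_c))\geq t$, and otherwise set $\Lambda(i,c):=\overline{\log}_q(u_i(\pi_i))-\overline{\log}_q(u_i(v_c))$. To confirm that $\Lambda$ takes values in $[0,t]$, note that for $v_c\in\pi_j$ envy-freeness of $\Pi$ yields $u_i(v_c)\leq u_i(\pi_j)\leq u_i(\pi_i)$ (with the case $j=i$ handled by additivity and non-negativity of $u_i$), so the claimed difference of ceilings is non-negative whenever $u_i(v_c)>0$.

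The main obstacle is verifying the bucket-profile inequality
\[
\sum_{\substack{c\in S_j\\ \Lambda(i,c)\neq t}} q^{-\Lambda(i,c)} \;\leq\; q
\]
for every pair of distinct agents $i,j\in A$. Using the standard ceiling bounds $x\leq q^{\overline{\log}_q(x)}\leq qx$ for $x>0$, every term with $\Lambda(i,c)\neq t$ and $v_c\in\pi_j$ satisfies
\[
q^{-\Lambda(i,c)} \;=\; \frac{q^{\overline{\log}_q(u_i(v_c))}}{q^{\overline{\log}_q(u_i(\pi_i))}} \;\leq\; \frac{q\cdot u_i(v_c)}{u_i(\pi_i)}.
\]
Summing over the relevant $c\in S_j$ and recognising that $\{v_c : c\in S_j\}$ is precisely $\pi_j$, the numerator is bounded by $q\cdot u_i(\pi_j)$, and envy-freeness $u_i(\pi_j)\leq u_i(\pi_i)$ then delivers the desired bound. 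The degenerate case $u_i(\pi_i)=0$ is handled separately: envy-freeness forces $u_i(\pi_j)=0$ for every $j$, hence $u_i(v_c)=0$ for every $c$, so $\Lambda(i,\cdot)\equiv t$ and the sum is empty.

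Finally, that $\Pi$ respects $(\mathcal{C},\mathcal{S},\Lambda)$ is immediate: $|\pi_i\cap C_c|=1$ if and only if $v_c\in\pi_i$ if and only if $c\in S_i$, and for each $c\in[p]$ the unique element of $\pi\cap C_c$ is $v_c$, so the condition on $\Lambda(i,c)$ holds by its very definition. This places $(\mathcal{C},\mathcal{S},\Lambda)$ in $\mathcal{F}_{\mathcal{C}}\times\mathcal{F}_{\mathcal{S}}\times\mathcal{F}_{\Lambda}=\mathcal{B}$ and completes the claim.
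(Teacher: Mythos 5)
Your proof is correct and follows essentially the same route as the paper's: pick a perfect-hash function injective on the $p$ allocated labels to get $\mathcal{C}$, read off $\mathcal{S}$ from which bundle contains each $v_c$, define $\Lambda$ by the truncated difference of $\overline{\log}_q$ values, and verify the bucket-profile inequality via $x\leq q^{\overline{\log}_q(x)}\leq qx$ together with envy-freeness. Your explicit treatment of the degenerate case $u_i(\pi_i)=0$ and of the non-negativity of $\Lambda$ is a small but welcome addition of rigor over the paper's version.
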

    \begin{proof}
        Let $\mathcal{C}$ be an element of $\mathcal{F}_{\mathcal{C}}$, such that for every $c \in [p]$, we have that $|C_c\cap \bigcup_{i \in A} \pi_i| = 1$. To see that such an element exists, observe that since $\Pi$ is a valid allocation, the sets $\pi_i$ are pairwise disjoint and $\sum_{i \in A} |\pi_i| = p$, which implies that $|\bigcup_{i \in A} \pi_i| = p$. Thus, the set of labels assigned to vertices in $\bigcup_{i \in A} \pi_i$ forms a subset of $[m]$ of size $p$. By the definition of an $(m, p)$-perfect hash family, there exists a function that is injective on this set, and this function induces a partition of $V(G)$ satisfying the desired property. Let $\mathcal{S}$ be a partition of $[p]$ such that $c\in [p]$ belongs to $S_i$ if and only if $C_c\cap \pi_i \neq \emptyset$. Finally, for every agent $i \in A$, $c \in [p]$ and $v\in \pi \cap C_c$, define, $\Lambda(i, c) = \overline{\log}_q(u_i(\pi_i)) - \overline{\log}_q(u_i(v_c))$ if it is less than $t$ and $u_i(v_c) > 0$, and $\Lambda(i, c) = t$ otherwise. The tuple $(\mathcal{C}, \mathcal{S}, \Lambda)$ is a bucket profile, since for any pair of distinct agents $i,j\in A$, we have, : 
        $$
        \sum_{\substack{c\in S_j\\\Lambda(i,c)\neq t}} q^{-\Lambda(i,c)} \quad
        \leq \quad \sum_{\substack{c\in S_j\\ v\in\pi\cap C_c}} q^{-\left\lceil \log_q(u_i(\pi_i)) \right\rceil + \left\lceil \log_q(u_i(v_c)) \right\rceil} \quad
        \leq \quad q \cdot \sum_{v\in\pi_j} \frac{u_i(v)}{u_i(\pi_i)} \quad
        \leq \quad q
        $$
        Finally, since $\Pi$ respects the bucket profile $(\mathcal{C}, \mathcal{S}, \Lambda)$ by definition, the lemma follows.
    \end{proof}
    
    Let $(\mathcal{C}, \mathcal{S}, \Lambda)$ be the bucket profile in $\mathcal{B}$ that is respected by $\Pi$, whose existence is guaranteed by Claim~\ref{claim:if_allocation_then_bucket_profile}. Let $\bar{\mu}^o$ denote the target vector attained by $\Pi$, where $\mu^o_i := \lceil \log_q(u_i(\pi_i)) \rceil$ for each $i \in A$. 
    We claim that, while the algorithm processes the bucket profile $(\mathcal{C}, \mathcal{S}, \Lambda)$, it never considers any target vector $\bar{\mu}$ for which $\mu_i < \mu^o_i$ for some $i \in A$. Suppose, for contradiction, that it does. Let $\bar{\mu}^{\alpha}$ be the first such target vector considered by the algorithm and $\Bar{\mu}^\beta$ be the target vector of the previous iteration. Let $i\in A$ be the agent returned by the subroutine while considering $(\mathcal{C}, \mathcal{S}, \Lambda)$ and $\Bar{\mu}^\beta$. By definition of $\bar{\mu}^\alpha$, we have that $\mu^\alpha_i = \mu^o_i - 1$ and $\mu^\alpha_j \geq \mu^o_j$ for every $j\in A$ such that $j\neq i$, which implies $\Bar{\mu}^o\leq \Bar{\mu}^\beta$ and $\mu^{\beta}_i = \mu^o_i$. 
    But, since $\Pi$ is a valid and envy-free allocation that respects $(\mathcal{C}, \mathcal{S}, \Lambda)$ and attains $\bar{\mu}^o$ for agent $i$, by the correctness of Lemma~\ref{lem:subroutine_house_allocation}, agent $i$ cannot be returned by the subroutine in the iteration corresponding to $(\mathcal{C}, \mathcal{S}, \Lambda)$ and target vector $\bar{\mu}^\beta$. Thus, we conclude that while the algorithm processes the bucket profile $(\mathcal{C}, \mathcal{S}, \Lambda)$, it never considers any target vector $\bar{\mu}$ for which $\mu_i < \mu^o_i$ for some $i \in A$.
    Observe that this suffices to prove that the algorithm returns an allocation, since the algorithm begins the iteration corresponding to $(\mathcal{C}, \mathcal{S}, \Lambda)$ with $\bar{\mu} = \bar{\mu}'$, and in every iteration where it does not return an allocation, it decreases $\mu_i$ for some $i \in A$. Therefore, if the algorithm never encounters a target vector $\bar{\mu}$ with $\mu_i < \mu_i^o$ for any $i \in A$, it must eventually consider $\bar{\mu} = \bar{\mu}^o$, before reaching a target vector with at least one entry equal to negative two. By the correctness of the subroutine in Lemma~\ref{lem:subroutine_house_allocation}, an allocation must be returned at this point, since $\Pi$ is a valid and envy-free allocation that respects $(\mathcal{C}, \mathcal{S}, \Lambda)$ and attains $\Bar{\mu}^o$ for every agent $i$. This concludes the proof of correctness of the algorithm. 

    Finally, the running time bound of $(\frac{1}{\varepsilon}\log\frac{p}{\varepsilon})^{\mathcal{O}(|\mathcal{A}|p^2)}m^{\mathcal{O}(1)}$ follows from the fact that the subroutine runs in $3^p m^{\mathcal{O}(1)}$ time, together with the observation that the algorithm considers at most $e^p p^{\mathcal{O}(\log p)}\log m \cdot (p |\mathcal{A}|)^p \cdot t^{|\mathcal{A}|p^2}$ bucket profiles in $\mathcal{B}$, and that the entries of $\bar{\mu}$ can be decreased at most $\mathcal{O}(p\, m\, |\mathcal{A}|)$ times before at least one of them becomes negative.
\end{proof}

We remark that this algorithm extends to the mandatory setting with only minor modification. In particular, it suffices to strengthen the definition of a bucket profile to ensure that the partition $\mathcal{S}$ only contains non-empty sets. The correctness of the algorithm then follows with straightforward changes to account for this additional constraint.

Finally, we conclude the section by restating Lemma~\ref{lem:graph_motif_algo} and describing the algorithm it refers to.

\GraphMotifAlgo*
\begin{proof}
    Let $H$ be a subgraph of $G$. We use the term \emph{weight of $H$} to mean the quantity $\sum_{v\in V(H)}w(v)$. We describe a dynamic programming algorithm that computes the weight of the maximum-weight connected subgraph $H$ of $G$ that satisfies $|V(H)\cap C_i| = 1$ for every $i\in [k]$, if such a subgraph exists. Using standard backtracking techniques or self-reduction, a subgraph that realizes this value can be recovered with, in the worst case, an additional overhead of $n^{\mathcal{O}(1)}$ in the running time.

    First, let us describe a function $f : V(G)\times 2^{[k]} \rightarrow \mathbb{N}_{\geq 0}$, where $2^{[k]}$ denotes the set of all subsets of $[k]$.
    \begin{align}\label{recurrance}
        f(v, S) = 
        \begin{cases}
            w(v),  &\quad \text{if } S = \{i\} \text{ and } v \in C_i, \\
            -\infty, &\quad \text{if } v \notin \bigcup_{i \in S} C_i, \\
            \displaystyle\max_{\substack{\emptyset \subsetneq S_1 \subsetneq S \\ u \in N[v]}} \left\{ f(u, S_1) + f(v, S \setminus S_1) \right\}, & \quad\text{otherwise}.
        \end{cases}
    \end{align}
    
    It is straightforward to observe that $f(v, S)$ computes the weight of the maximum-weight connected subgraph $H$ of $G[\bigcup_{i \in S} C_i]$ that contains the vertex $v$. The algorithm constructs a dynamic programming table $F[v, S]$, where $v \in V(G)$ and $S \subseteq [k]$, by filling entries in increasing order of $|S|$ starting with singleton sets. The entries corresponding to larger sets, are computed using the recurrence of function $f$ in Definition~\eqref{recurrance}, after replacing recursive calls to $f$ with lookups to the table $F$ at the corresponding indices. After populating the table states, the algorithm checks the value of $\max_{v\in V(G)}F[v,[k]]$. If it is $-\infty$ the algorithm reports that no such subgraph exists, otherwise it returns the value. The correctness of the algorithm follows from an induction on the size of the subset $S$ in the definition of the table entries. Finally, we observe that the running time is upper bounded by 
    $$
    \sum_{\substack{v\in V(G)\\ S\subseteq [p]}}2^{|S|}\cdot |V(G)| \
    \leq \sum_{v\in V(G)}\sum_{i\in [k]} \binom{k}{i}\cdot 2^i\cdot 1^{k-i}\cdot |V(G)| \
    = \sum_{v\in V(G)}3^k\cdot |V(G)| \
    = \ 3^k |V(G)|^{2},
    $$ 
    which concludes the proof of the lemma.
\end{proof} 

\section{Concluding Remarks and Discussion}
We prove that \icfd is W[1]-hard parameterized by $p+|A|$, even when the input numbers are in unary representation and provide an EPAS parameterized by $p+|\mathcal{A}|$ for the problem even when input is in binary representation. We remark that our algorithm outperforms the brute force algorithm running in $\mathcal{O}(m^p\cdot n^p)$ when $p < \mathcal{O}(\log m + \log n)$. This regime remains practically relevant, particularly in scenarios involving the allocation of expensive or limited goods.

Beyond envy-freeness, other fairness notions have been explored in the incomplete and connected setting. Gahlawat and Zehavi~\cite{icfd_gahlawat2023parameterized} investigated three such notions, including EF1 and EFX, which relax envy-freeness by requiring that envy disappears once the most (respectively, least) valuable item is removed from every other agent’s bundle. They showed that finding an EF1 (EFX) allocation parameterized by $p + |A|$ is W[1]-hard when the input is given in binary representation. Our results may provide a starting point for answering related questions to these relaxed fairness concepts.

\bibliography{bib2doi}

\end{document}